\def\eprintversion{1}
\newif\ifsubmission
\newif\iffull
	\def\tableofcontents{\section*{\contentsname\@mkboth{{\contentsname}}%
	{{\contentsname}}}
	 \def\authcount##1{\setcounter{auco}{##1}\setcounter{@auth}{1}}
	 \def\lastand{\ifnum\value{auco}=2\relax
	                 \unskip{} \andname\
	              \else
	                 \unskip \lastandname\
	              \fi}%
	 \def\and{\stepcounter{@auth}\relax
	          \ifnum\value{@auth}=\value{auco}%
	             \lastand
	          \else
	             \unskip,
	          \fi}%
	 \@starttoc{toc}\if@restonecol\twocolumn\fi}
\definecolor{rvone}{RGB}{139,34,82}
\definecolor{rvtwo}{rgb}{0.00,0.62,0.22}
\renewcommand{\epsilon}{\varepsilon}
\renewcommand{\Pr}{\mathsf{Pr}}
\newcommand{\Adv}{\mathsf{Adv}}
\newcommand{\abs}[1]{\left| #1 \right|}
\newcommand{\sG}{{\normalfont \textsf{G}}}
\newcommand{\sH}{{\normalfont \textsf{H}}}
\newcommand{\bD}{\mathbf{D}}
\newcommand{\advA}{\mathcal{A}}
\newcommand{\advB}{\mathcal{B}}
\newcommand{\advD}{\mathcal{D}}
\newcommand{\env}{\mathcal{E}}
\newcommand{\distD}{\bD}
\newcommand\parag{%
   \@startsection{paragraph}{4}{\z@}%
       {-3\p@ \@plus -1\p@ \@minus -1\p@}%
       {-0.2em \@plus -0.12em \@minus -0.05em}%
       {\normalfont\normalsize\scshape}}
\newcommand{\heading}[1]{\parag{#1}}
\renewcommand{\heading}[1]{\subsubsection{#1.}}
\newcommand{\ind}{\hspace*{10pt}}
\newcommand{\comment}[1]{\hspace{5pt}{\small\color{blue} /$\!\!$/\ #1}}
\newcommand{\true}{\texttt{true}}
\newcommand{\sett}[1]{\{#1\}}
\newcommand{\getsr}{{\:{\leftarrow{\hspace*{-3pt}\raisebox{.75pt}{$\scriptscriptstyle\$$}}}\:}}
\newcommand{\bits}{\{0,1\}}
\newcommand{\xor}{\oplus}
\definecolor{highlight-gray}{gray}{0.90}
\definecolor{highlight-yellow}{cmyk}{0,0,0.90,0}
\definecolor{highlight-cyan}{cmyk}{0.40,0,0,0}
\definecolor{highlight-magenta}{cmyk}{0,0.90,0,0}
\newcommand{\gamechange}[2][highlight-gray]{{\setlength{\fboxsep}{0pt}\colorbox{#1}{\ifmmode$\displaystyle#2$\else#2\fi}}}
\newcommand{\gamechangey}[2][highlight-yellow]{{\setlength{\fboxsep}{0pt}\colorbox{#1}{\ifmmode$\displaystyle#2$\else#2\fi}}}
\newcommand{\gamechangec}[2][highlight-cyan]{{\setlength{\fboxsep}{0pt}\colorbox{#1}{\ifmmode$\displaystyle#2$\else#2\fi}}}
\newcommand{\gamechangem}[2][highlight-magenta]{{\setlength{\fboxsep}{0pt}\colorbox{#1}{\ifmmode$\displaystyle#2$\else#2\fi}}}
\newcommand{\gamechanger}[2][highlight-green]{{\setlength{\fboxsep}{0pt}\colorbox{#1}{\ifmmode$\displaystyle#2$\else#2\fi}}}
\colorlet{highlight-green}{green!40!yellow}
\newcommand{\Func}{\mathsf{Fcs}}
\newcommand{\Perm}{\mathsf{Perm}}
\newcommand{\ro}{\mathbf{H}}
\newcommand{\dist}{\mathsf{hide}}
\newcommand{\guess}{\mathsf{ow}}
\newcommand{\undist}{\mathsf{un\mbox{-}hide}}
\newcommand{\unguess}{\mathsf{un\mbox{-}ow}}
\newcommand{\pzdist}{\mathsf{pz\mbox{-}hide}}
\newcommand{\pzguess}{\mathsf{pz\mbox{-}ow}}
\newcommand{\abkhdist}{\mathsf{perm}}
\newcommand{\figref}[1]{Fig.~\ref{#1}}
\newcommand{\secref}[1]{Section~\ref{#1}}
\newcommand{\apref}[1]{Appendix~\ref{#1}}
\newcommand{\thref}[1]{Theorem~\ref{#1}}
\newcommand{\procRo}{\textsc{Ro}}
\newcommand{\procOrac}{\textsc{O}}
\newcommand{\procPerm}{\textsc{Perm}}
\newcommand{\procReprogram}{\textsc{Rep}}
\newcommand{\procFReprogram}{\textsc{FRep}}
\newcommand{\procFRo}{\textsc{FRo}}
\newcommand{\procSamp}{\textsc{Samp}}
\newcommand{\procFSamp}{\textsc{FSamp}}
\newcommand{\had}{\mathcal{H}}
\newcommand{\gamesfontsize}{\small}
\newcommand{\oneCol}[2]{
\begin{center}
        \framebox{
        \begin{tabular}{c}
        \begin{minipage}[t]{#1\textwidth}
        	\gamesfontsize #2 \end{minipage}
        \end{tabular}
        }
\end{center}
}
\newcommand{\oneColNoBox}[1]{
\begin{center}\begin{tabular}{c}
\begin{minipage}[t]{1in}\begin{tabbing}
123\=123\=123\=\kill
#1
\end{tabbing}\end{minipage} 
\end{tabular}\end{center}}
\newcommand{\twoColsNoBoxNoDivide}[2]{
\begin{center}\begin{tabular}{ccc}
\begin{minipage}[t]{1in}\begin{tabbing}
12\=12\=12\=\kill
#1
\end{tabbing}\end{minipage} & \hspace{10pt} &
\begin{minipage}[t]{1in}\begin{tabbing}
12\=12\=12\=\kill
#2
\end{tabbing}\end{minipage} 
\end{tabular}\end{center}}
\newcommand{\twoColsNoDivide}[4]{
\begin{center}
        \framebox{
        \begin{tabular}{c@{\hspace*{.4em}}c}
        \begin{minipage}[t]{#1\textwidth}
        	\gamesfontsize #3 \end{minipage}
        &
        \begin{minipage}[t]{#2\textwidth}
        	\gamesfontsize #4 \end{minipage}
        \end{tabular}
        }
\end{center}
}
\newcommand{\threeColsNoDivide}[6]{
\begin{center}
        \framebox{
        \begin{tabular}{c@{\hspace*{.4em}}c@{\hspace*{.4em}}c}
        \begin{minipage}[t]{#1\textwidth}
        	\gamesfontsize #4 \end{minipage}
        &
        \begin{minipage}[t]{#2\textwidth}
        	\gamesfontsize #5 \end{minipage}
        &
        \begin{minipage}[t]{#3\textwidth}
        	\gamesfontsize #6 \end{minipage}
        \end{tabular}
        }
\end{center}
}
\newif\ifnotes
\newcommand{\js}[1]{$\ll$\textsf{\color{blue} JS: { #1}}$\gg$}
\newcommand{\js}[1]{}
\begin{document}

\title{Nonadaptive One-Way to Hiding Implies\texorpdfstring{\\}{ }Adaptive Quantum Reprogramming}
\titlerunning{Nonadaptive One-Way to Hiding Implies Adaptive Quantum Reprogramming}
%

	\renewcommand{\orcidID}[1]{\kern .08em\href{https://orcid.org/#1}{\protect\includegraphics[keepaspectratio,height=1.0em]{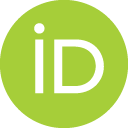}}}
	\author{Joseph Jaeger\orcidID{0000-0002-4934-3405}}
	\authorrunning{J. Jaeger}
	\institute{
			School of Cybersecurity and Privacy\\
			Georgia Institute of Technology\\
			Atlanta, Georgia, USA\\
			\email{josephjaeger@gatech.edu}\\
			\href{https://cc.gatech.edu/~josephjaeger/}{cc.gatech.edu/\textasciitilde josephjaeger/}
	}


\begin{center}
	A preliminary version of this work appears in the \textsc{Asiacrypt 2025} proceedings (\copyright IACR 2025). \newline
	This is the full version.
\end{center}
\vspace{1em}
\iffull
	\renewcommand*\contentsname{Contents} 
	{\def\addcontentsline#1#2#3{}{\let\newpage\relax\maketitle}} 
\else
	{\let\newpage\relax\maketitle}              
\fi

\noindent
\makebox[\linewidth]{\today}


\begin{abstract}
	An important proof technique in the random oracle model involves reprogramming it on hard to predict inputs and arguing that an attacker cannot detect that this occurred.
	In the quantum setting, a particularly challenging version of this considers adaptive reprogramming wherein the points to be reprogrammed (or the output values they should be programmed to) are dependent on choices made by the adversary.
	Some quantum frameworks for analyzing adaptive reprogramming were given by Unruh (CRYPTO 2014, EUROCRYPT 2015), Grilo-H{\"o}velmanns-H{\"u}lsing-Majenz (ASIACRYPT 2021), and Pan-Zeng (PKC 2024). 
	We show, counterintuitively, that these adaptive results follow from the \emph{nonadaptive} one-way to hiding theorem of Ambainis-Hamburg-Unruh (CRYPTO 2019).
	These implications contradict beliefs (whether stated explicitly or implicitly) that some properties of the adaptive frameworks cannot be provided by the Ambainis-Hamburg-Unruh result.
\end{abstract}

{\hypersetup{linkcolor=black}\tableofcontents}
\newpage

\newcommand{\changelocaltocdepth}[1]{%
  \addtocontents{toc}{\protect\setcounter{tocdepth}{#1}}%
  \setcounter{tocdepth}{#1}%
}


\section{Introduction}



Hash functions are a pillar of modern practical cryptography.
Many of the most efficient algorithms for a given task (public key encryption, digital signatures, authenticated key exchange, \dots) crucially rely on hash functions for their security. 
Often the security cannot be reduced to standard model assumptions about the hash function and is instead justified by modeling it as a random oracle~\cite{CCS:BelRog93}.
A variety of methods of expressing random oracle model proofs are known which can make it easy to express analyses based on basic probabilistic analysis~\cite{EC:BelRog06,EC:CheSte14,EC:Maurer02,SAC:Patarin08,EPRINT:Shoup04}.
An important aspect of this is bounding the probability that an attacker notices if we adaptively modify the behavior of the oracle on inputs which are statistically/computationally hard for the attacker to guess.

As we prepare for a post-quantum future, there is a natural desire to port these benefits over to the quantum random oracle model~\cite{AC:BDFLSZ11} which captures that an attacker locally computing a hash function may use a quantum computer to do so in superposition. 
With this change in model, existing classical proofs no longer apply and, more importantly, the standard proof techniques no longer work.
Motivated by this challenge, many techniques have been introduced for quantum random oracle model analysis~\cite{C:AmbHamUnr19,TCC:BHHHP19,SAC:Eaton17,AC:GHHM21,PKC:HulRijSon16,C:JZCWM18,EC:KSSSS20,PKC:PanZen24,C:Unruh14,EC:Unruh15,unruh,FOCS:Zhandry12,C:Zhandry12,C:Zhandry19}.

Rather than providing new analysis tools, we show that an existing tool by Ambainis, Hamburg, and Unruh~\cite{C:AmbHamUnr19} (AHU) is more broadly applicable than previously realized.\footnote{In fact, AHU provide multiple O2H theorems. We focus on applying their Theorem 3, which is derived from the Semi-classical O2H theorem (their Theorem 1) which is the focus of their work.}
This ``one-way to hiding'' (O2H) theorem (building on earlier versions~\cite{SAC:Eaton17,C:JZCWM18,C:Unruh14,EC:Unruh15,unruh}) bounds an attacker's advantage in distinguishing between a pair of randomly chosen functions based on the probability a related algorithm can extract an input on which they differ. 
The common impression~\cite{C:AmbHamUnr19,AC:GHHM21,PKC:PanZen24} seems to be that this result is highly non-adaptive because these two functions must be fixed at the beginning of the experiment, and thus it cannot be used for adaptive analysis where the definitions of the functions can update throughout the experiment (except in a few special corner cases where an adaptive problem can cleverly be expressed nonadaptively).

We dispel this notion, showing that through a change of viewpoint we can easily analyze many highly adaptive problems.
As concrete applications of this we prove that an ``adaptive reprogramming framework'' of Pan and Zeng~\cite{PKC:PanZen24}, a ``tight adaptive reprogramming theorem'' of Grilo, H{\"o}velmanns, H{\"u}lsing, Majenz~\cite{AC:GHHM21}, and ``adaptive O2H'' lemmas of Unruh~\cite{C:Unruh14,EC:Unruh15} are all implied by the AHU O2H result.
We use straightforward proofs that largely rely on classical-like reasoning.
The proofs of the ``tight adaptive reprogramming theorem'' and one of Unruh's ``adaptive O2H'' lemmas start by representing a random oracle sparsely with Zhandry's technique~\cite{C:Zhandry19}.
Our concrete bounds are essentially equivalent to or better than the existing bounds.

Our approach uses a proof technique of Jaeger, Song, and Tessaro~\cite{TCC:JaeSonTes21} which applied a variant of AHU's O2H with fixed permutations to analyze the quantum security of a key-length extension technique they call FFX.
We show that this technique is broadly applicable for proving adaptive reprogramming results.
The main idea underlying this technique is to switch away from viewing the pair of O2H functions in AHU's theorem as strictly applied to the attacker's input. 
We instead use permutations that take a combined input containing both the attacker's input and the current state of the ``security game'', using the latter to respond to the former. 
Updating the state of the game thereby adaptively changes the inputs on which the oracles differ \emph{from the attacker's perspective}, but actually the set of \emph{combined} inputs on which the permutations differ is fixed from the beginning.
In essence, we view the O2H distinguisher as internally running both the attacker and the security game, only exporting very specific pieces of the computation to an oracle.
The idea of having the O2H distinguisher run both the attacker and the security game is the core technical idea that allows adaptive proofs using AHU's O2H.

\subsection{Technique Overview}
Consider a setting where an attacker $\advA$ makes $q$ queries $x_1,\dots,x_q$ to random oracle $H$.
It's allowed to adaptively select points $x_i^{\ast}$, asking for $H(x_i^{\ast})$ to be redefined to some $y_i^{\ast}$ up to $n$ times.
Let $p_1$ be the probability it outputs $1$ at the end of its execution and $p_0$ be the probability it outputs $1$ if instead $H$ is never redefined. 
Our goal is to bound $|p_1-p_0|$.

This setting is adaptive in two senses.
First, the differences between the oracles in the experiment change over time, rather than being chosen at the beginning of the experiment. 
Second, the particular ways in which the oracles are changed may depend on earlier queries of the adversary.

\heading{Classical Analysis}
To analyze this classically, we might use a Bellare-Rogaway style ``equivalent-until-bad'' approach~\cite{EC:BelRog06} (or analogous approaches by Maurer~\cite{EC:Maurer02} or Shoup~\cite{EPRINT:Shoup04}).
Therein we express this setting as a pair of pseudocode games, parameterized by a bit $b$, which are syntactically identical except after a flag $\mathsf{bad}$ is set. 
This could, for example, be captured by using the pseudocode oracle $O_b$ on the left of \figref{fig:bad}.
Now the fundamental lemma of game playing tells us that $|p_0-p_1|\leq\Pr[\mathsf{bad}]$.

At this point, we would bound $\Pr[\mathsf{bad}]$ based on some assumption about $\advA$ (e.g., that the $x_i^{\ast}$ are statistically or computationally unpredictable).
Applying a union bound across all of the queries of $\advA$ gives
\[
	\Pr[\mathsf{bad}] \leq \sum_j \Pr[x_j \in \sett{x_i^{\ast}}]= q\mathop{\mathbb{E}}_{j}[\Pr[x_j \in \sett{x_i^{\ast}}]].
\]
For example, if the $x_i^{\ast}$ are from adaptively chosen distributions that always have min-entropy at least $\mu$ and the $x_i^{\ast}$ are used nowhere else we get $|p_0-p_1|\leq nq2^{-\mu}$.
Notably, (treating the fundamental lemma of game playing as given) the analysis consists entirely of syntactic rewriting of the setting as pseudocode combined with basic probability calculation.

\begin{figure}[t]
\twoColsNoBoxNoDivide{
\underline{$O_b(x_j)$}\\
If $\exists i, x_i^\ast=x_j$: \\
\ind $\mathsf{bad}\gets \true$\\
\ind If $b=1$: Return $y_i^\ast$\\ 
Return $H(x_j)$\smallskip
}
{
\underline{$P_{b}(X, Y : H, X^{\ast}, Y^{\ast}, I)$}\\
If $\exists i \leq I, X_i^\ast = X$\comment{$\mathsf{bad}$}\\
\ind If $b=1$: $Y\gets Y \xor Y_i^\ast$\\
\ind If $b=0$: $Y\gets Y \xor H(X_i^{\ast})$\\
Else $Y\gets Y \xor H(X)$\\
Return $(X,Y:H,X^{\ast},Y^{\ast},I)$\smallskip
}
\caption{\textbf{Left:} Expression of oracle as pseudocode for classical equivalent-until-bad analysis. \textbf{Right:} Expression of oracle as a classical permutation queried in superposition for quantum Fixed-Permutation O2H analysis.}
\label{fig:bad}
\hrulefill
\end{figure}

\heading{Nonadaptive O2H}
Now consider a setting where $x_i^{\ast},y_i^{\ast}$ are still chosen classically, but the attacker has quantum access to $H$.
The standard formalization of this allows computing the classical permutation $H[\xor]:(x,y)\mapsto(x,y\xor H(x))$ in superposition.
We can no longer apply the approach above, as the oracle can be queried in a superposition over all $x$ at once so the bad event that it was queried on some $x_i^{\ast}$ is not even well defined.
One approach to this is ``one-way to hiding'' (O2H) results~\cite{C:AmbHamUnr19,SAC:Eaton17,C:JZCWM18,C:Unruh14,EC:Unruh15,unruh} which can be thought of as a quantum analog to equivalent-until-bad analysis.
Consider using an O2H lemma of AHU~\cite[Theorem 3]{C:AmbHamUnr19}, which (when simplified) considers a distribution over functions $H_0$, $H_1$ and tells us that 
\[
	|p_{H_0}-p_{H_1}|\leq 2q\sqrt{\mathop{\mathbb{E}}_{j}[\Pr[\textrm{Measure}(X_j) \in S]]}.
\]
Here $p_{H_b}$ is the probability an adversary outputs 1 when interacting with $H_b[\xor]$ and the last probability considers running the adversary $\advA$ with access to either oracle $H_b$, measuring its $j$-th query to the oracle, and checking whether the resulting measured value $x_j$ is in the set $S=\sett{x~:~H_0(x)\neq H_1(x)}$.

Unfortunately, this result is nonadaptive.
The two functions $H_0$ and $H_1$ (and thus the points where they differ) are fixed at the beginning of the game.
Consequently, it \emph{seems} that the result can only be applied in the limited case that all $x_i^{\ast},y_i^{\ast}$ are chosen at the beginning of the game and thus the programming of $H$ occurs immediately.
(For this, consider the distribution that samples the $x_i^{\ast},y_i^{\ast}$, defines $H_0$ to be a random oracle, and defines $H_1$ to equal $H_0$ except it is reprogrammed so $H_1(x_i^{\ast})=y_i^{\ast}$ for all $i$.)
For example, if the $x_i^{\ast}$ are from distributions with min-entropy at least $\mu$ and the $x_i^{\ast}$ are used nowhere else we get $|p_0-p_1|\leq \sqrt{n q^2 2^{-\mu}}$.

\heading{Fixed-Permutation O2H}
This impression is incorrect. 
We can analyze many adaptive reprogramming settings using AHU's O2H.
Start by simplifying it so that rather than considering distributions over permutations of the form $H_b[\xor]$ we consider just two permutations $P_0$ and $P_1$ that are a priori fixed, rather than chosen at random.
It follows from a Jaeger, Song, and Tessaro~\cite{TCC:JaeSonTes21} variant of AHU's O2H lemma that
\[
	|p_{P_0}-p_{P_1}|\leq 2q\sqrt{\mathop{\mathbb{E}}_{j}[\Pr[\textrm{Measure}(~P_0(X_j)\neq P_1(X_j)~)]]}.
\]
We call this the Fixed-Permutation O2H.
To apply this result, we switch our viewpoint.
Rather than thinking of $P_b$ being the functions the attacker $\advA$ might have access to (and so the O2H distinguisher is essentially identical to the attacker) we will think of building a distinguisher $\advD$ that jointly runs both $\advA$ and the ``security game'' $\advA$ is interacting with.
Then the permutations $P_b$ will be permutations that process oracle queries as a function of both the attacker's input and the game's state. 

In our running example, to simulate the game the distinguisher will store the random function in a quantum register $H$ and the reprogramming points $x_i^{\ast},y_i^{\ast}$ in registers $X^{\ast}_i$ and $Y^{\ast}_i$ (together with a register $I$ counting how many points have been chosen so far). 
When the original attacker $\advA$ wants to query $H_b[\xor]$ with registers $X,Y$ while being run internally by the distinguisher $\advD$, the distinguisher forwards this together with the game registers $H, X^{\ast}, Y^{\ast}, I$ as a query $(X,Y,H, X^{\ast}, Y^{\ast}, I)$ to its oracle $P_b$ defined on the right side of \figref{fig:bad}.
It returns just the $X$ and $Y$ registers to $\advA$.
Notationally, we use a colon as syntactic sugar to distinguish the input registers controlled by the attacker $\advA$ or the game it is interacting with.
We emphasize, however, that $P_b$ is a permutation on six inputs, \emph{all} of which are controlled by the distinguisher.

In this way, the distinguisher perfectly simulates the view of the attacker and the two permutations only differ on inputs for which $X$ is one of the reprogrammed points (as specified by $X^{\ast}$ and $I$).
Thus the above bound on $|p_{P_0}-p_{P_1}|$ gives a bound on $|p_0-p_1|$.
For example, if the $x_i^{\ast}$ are from adaptively chosen distributions that always have min-entropy at least $\mu$ and the $x_i^{\ast}$ are used nowhere else we get $|p_0-p_1|\leq \sqrt{n q^2 2^{-\mu}}$.
Notably, (treating the Fixed-Permutation O2H as given) the analysis consisted entirely of syntactic rewriting of the oracles as permutations, combined with basic probability calculation.

If we considered an arbitrary distinguisher $\advD$, the Fixed-Permutation O2H would be vacuous because $\advD$ could always just make all its queries with $I>1$ and $X=X^*_i$, making the right sight of the bound equal to one.
However, we will \emph{not} apply the bound with arbitrary distinguishers.
We apply it with distinguishers of the particular form described above for whom the value of the right side of the bound depends on $\advA$.

\heading{Backwards Bounds from Sparse Functions}
Thinking classically, if each $y_i^\ast$ is uniformly random, then swapping it with $H(x_i^{\ast})$ will only actually be detectable if the attacker queries its oracle at $x_i^{\ast}$ \emph{both} before and after the reprogramming.
This can be important because in some settings $x_i^{\ast}$ may be hard to query to $H$ before the reprogramming occurred but easy to query afterwards.
Our approach so far only considering the probability of $x_i^{\ast}$ being queried after reprogramming and thus could not be applied. 

\begin{figure}[t]
\twoColsNoBoxNoDivide{
\underline{$\textsc{CRo}(x)$}\\
If $T[x]=\bot$: Sample $T[x]$\\
Return $T[x]$\\[4pt]
\underline{$\textsc{CRep}_b(x_i^{\ast})$}\\
If $T[x_i^{\ast}]\neq\bot$:\\
\ind $\mathsf{bad}\gets\true$\\
\ind If $b=1$: Sample $T[x_i^{\ast}]$\\
Return
\smallskip
}
{
\underline{$\procFRo(X,Y : H)$}\\
$H[X] \gets H[X] \xor Y$\\
Return $(X,Y : H)$\\[4pt]
\underline{$\procFReprogram_b(X^{\ast} : H, I, Z)$ }\\
If $H[X^{\ast}] \neq Z[I]$:\comment{bad}\\
\ind If $b=1$: Swap $H[X^{\ast}]$ and $Z[I]$\\
$I\gets I+1\bmod n$\\ 
Return $(X^{\ast} : H, I, Z)$\smallskip
}
\caption{\textbf{Left:} Pseudocode for classical proof using lazily sampled random function. \textbf{Right:} Permutations for Fixed-Permutation O2H proof using sparse representation of random function. Tables $H$ and $Z$ are initially all zero.}
\label{fig:bad2}
\hrulefill
\end{figure}

Fortunately, the solution to this in the classical setting can be applied to the Fixed-Permutation O2H approach as well. 
The core idea is to consider the ``bad event'' as occurring when $x_i^{\ast}$ is chosen.
Then we will look backwards in time to see if $x_i^{\ast}$ was previously queried to the random oracle.
To do this, we lazily sample the random oracle using a table and then check whether a new $x_i^{\ast}$ matches any of the values currently in the table.
This is captured by the pseudocode oracles $\textsc{CRo}$ and $\textsc{CRep}_b$ on the left of \figref{fig:bad2} for querying the random oracle and reprogramming it to a random value on a specified input.
Applying the fundamental lemma of game playing and union bounds gives. 
\[
	|p_0-p_1| 
	\leq \sum_i \Pr[x_i^{\ast} \in T]= 
	n\mathop{\mathbb{E}}_{i}[\Pr[x_i^{\ast} \in T]].
\]
If the $x_i^{\ast}$ are from adaptively chosen distributions that always have min-entropy at least $\mu$ we get $|p_0-p_1|\leq nq2^{-\mu}$.
Notably, (treating the fundamental lemma of game playing as given) the analysis consists entirely of straightforward syntactic rewriting of the setting as pseudocode combined with basic probability calculation.

For porting this idea over to the quantum regime, the important aspect of lazy sampling was that the random oracle was represented by a sparse table (one which had been written to in at most $q$ locations after $q$ queries).
Zhandry~\cite{C:Zhandry19} showed that we can similarly represent quantum random oracles with (superpositions over) sparse tables.
Thereby, we can move from the traditional quantum random oracle which does $y \gets y \xor H(x)$ for an $H$ initialized at random to the Fourier random oracle which does $H(x) \gets y \xor H(x)$ for an $H$ initialized to be all zero.
In this domain, reprogramming $H(x^{\ast}_i)$ to a random output can be performed by swapping its register with a fresh register initialized to be zero.
This is captured by the permutations $\textsc{FRo}$ and $\textsc{FRep}_b$ on the right of \figref{fig:bad2}.
Here registers $H$, $I$, and $Z$ are initialized as 0.

We can now apply the Fixed-Permutation O2H.
The distinguisher runs the attacker, internally storing the game registers and simulating $\procFRo$ for the attacker.
Whenever the original attacker $\advA$ wants to reprogram on the value in register $X^{\ast}$, the O2H distinguisher $\advD$ adds the game registers and then forwards this as a query $(X^{\ast},H,I,Z)$ to its own oracle $\procFReprogram_b$, then returns $X^{\ast}$ to the attacker.
Note that the $\procFReprogram_b$ permutations differ only if $H[X^{\ast}]$ and $Z[I]$ differ (and $Z[I]$ is necessarily zero), so we get
\[
	|p_0-p_1|\leq 2n\sqrt{\mathop{\mathbb{E}}_{i}[\Pr[\textrm{Measure}(H[X^{\ast}_i] \neq 0)]]}.
\]
This can give better bounds in the natural setting that $q\gg n$ because it (implicitly) switches a factor of $q\sqrt{n}$ with $n\sqrt{q}$.
For example, if the $x_i^{\ast}$ are from adaptively chosen distributions that always have min-entropy at least $\mu$ we get $|p_0-p_1|\leq \sqrt{n^2 q 2^{-\mu}}$.
Notably, (treating the Fixed-Permutation O2H as given) the analysis consisted entirely of straightforward syntactic rewriting of the oracles as permutations combined with basic probability calculation.\footnote{Prior to Zhandry's work, rewriting quantum random functions as sparse tables was not known to be straightforward. In light of the work, it is quite simple to do so for our purposes.}

\subsection{Applications of The Technique}
To show the broad applicability of the Fixed-Permutation O2H~\cite{TCC:JaeSonTes21} for adaptive proofs, we use it to imply an ``adaptive reprogramming framework'' of Pan and Zeng~\cite{PKC:PanZen24}, a ``tight adaptive reprogramming theorem'' of Grilo, H{\"o}velmanns, H{\"u}lsing, Majenz~\cite{AC:GHHM21} (GHHM), and ``adaptive O2H'' lemmas of Unruh~\cite{C:Unruh14,EC:Unruh15}.
We summarize these below.

For some of these results, the concrete bounds we establish have to be parameterized slightly differently than the original results, but we show that they are essentially equivalent (or better) in actual use.
Moreover, AHU's O2H shows that their upper bound also applies to the difference of square roots $|\sqrt{p_0}-\sqrt{p_1}|$.
This gives better bounds when used to prove $p_0$ is small based on a $p_1$ which is known to be small.
For example, if $p_1 = 0$ and the upper bound is $\epsilon$ we get $p_0 = (\sqrt{p_0} - \sqrt{p_1} )^2 \leq \epsilon^2$ which improves over $p_0 = p_0 - p_1 \leq \epsilon$.
We get the square root versions of all the results we consider for free.\footnote{As pointed out by an ASIACRYPT reviewer, a bound of $|p_0-p_1|\leq \epsilon$, generically implies a square root bound via $\sqrt{p_0}-\sqrt{p_1} \leq \sqrt{p_1+\epsilon} - \sqrt{p_1} \leq \sqrt{p_1} + \sqrt{\epsilon} - \sqrt{p_1} \leq \sqrt{\epsilon}$. Our bounds improve on this by getting $\sqrt{p_0}-\sqrt{p_1} \leq \epsilon$.}

\heading{Pan-Zeng Adaptive Reprogramming Framework}
Pan and Zeng~\cite{PKC:PanZen24} introduced an adaptive reprogramming framework which they use to analyze the selective-opening security of Fujisaki-Okamoto-style public key encryption algorithms.
They express a belief that AHU's O2H result lacked the properties needed for these proofs, saying that,
\begin{quote}
	Our core technical contribution is a computational adaptive reprogramming framework in the QROM that enables a security reduction to adaptively and simultaneously reprogram polynomially many RO-queries which are computationally hidden from a quantum adversary.
	This is a property that \emph{cannot be provided} by previous techniques in the QROM, such as \dots the semi-classical O2H lemma~\cite{C:AmbHamUnr19}\dots\footnote{This quote is from p.4 (aka p.95) of the proceedings version or p.3 of the current ePrint version. Emphasis ours. We've changed the citation to match our numbers.}
	\end{quote}
We prove that their computational adaptive reprogramming result is implied by the Fixed-Permutation O2H with a short proof, thereby establishing that the O2H lemma \emph{can} provide this property.\footnote{Technically, the mentioned semi-classical O2H lemma is a different O2H result of AHU, but AHU showed~\cite{C:AmbHamUnr19} it directly implies the O2H lemma that we prove equivalent to the Fixed-Permutation O2H in \secref{sec:prelims}.}
Their framework considers arbitrary reprogramming of the oracle and upper bounds distinguishing advantage by the probability that measuring the input of a random query gives a value at which the function differ.
In essence, their result follows directly from the first use of Fixed-Permutation O2H that we described above.

\heading{GHHM Adaptive Reprogramming Framework}
GHHM~\cite{AC:GHHM21} gave a tight adaptive reprogramming theorem for information theoretic settings where the reprogrammed points are from adaptively chosen distributions with high min-entropy, but are immediately given to the attacker. 
Consequently, the distinguishing advantage must be bound by the probability that one of the reprogrammed points is queried before being selected and does not seem to imply or be implied by the Pan-Zeng result.\footnote{The paper~\cite{PKC:PanZen24} erroneously implies that the GHHM theorem is captured by the Pan-Zeng framework. We pointed this error out to the authors who agreed that the current result does not imply the GHHM theorem.} 
Our proof of GHHM's theorem follows from our second use of Fixed-Permutation O2H by using Zhandry's technique for sparsely representing functions to provide ``backwards bounds''.
GHHM used their tool for tighter proofs of hash and sign techniques (e.g., used by XMSS), tighter proofs for Fiat Shamir signatures, and fault resistance for the hedged Fiat-Shamir transform.
Their theorem was later used by~\cite{EC:ABKM22,C:DFPS23,C:DFHS24,PKC:KosXag24,AC:MorYam22,ACISP:YuaSunTak24,ACISP:YuaTibAbe23}.

\heading{Unruh's Adaptive O2H}
Unruh~\cite{C:Unruh14,EC:Unruh15} gave adaptive variants of early O2H results for reprogramming on a single statistically hidden input.
These results obtain improved concrete bounds by separately considering the probability that reprogrammed point is queried before or after it is sampled.
Consequently, they do not seem to imply or be implied by either the Pan-Zeng or the GHHM result.

Notably, AHU proved a theorem (their Theorem 4) that had previously been shown using the first adaptive O2H result~\cite{C:Unruh14}.
While doing so, they note that ``at least in the proof from~\cite{TCC:TarUnr16}'' they could replace the adaptive O2H result with their nonadaptive version by programming the random oracle on many points.
Our proof of the~\cite{C:Unruh14} result applies the Fixed-Permutation O2H two times, one of which similarly programs the random oracle on many points.
Thereby we show that the approach of AHU actually extends to any application of the adaptive O2H, not just that particular proof. 

We reprove the~\cite{EC:Unruh15} result with two applications of the Fixed-Permutation O2H using ``backwards bounds''.
This setting is more general than in~\cite{C:Unruh14}, but the concrete bounds based on collision-entropy of the input distribution are incomparable.
Our bound is in terms of min-entropy which implies both the collision-entropy bound~\cite{EC:Unruh15} and an improved version of the bound in~\cite{C:Unruh14} (replacing a $q_0$ factor with $\sqrt{q_0}$). 

\heading{A Non-Application}
The Fixed-Permutation O2H is not a panacea.
We conclude the paper by discussing two results that seem out of reach of the Fixed-Permutation O2H with current techniques.
The first result, by Alagic, Bai, Katz, and Majenz~\cite{EC:ABKM22}, gives a variant of the GHHM result for random permutations reprogrammed on uniformly random points.
The result proves an $O(\sqrt{q/2^n})$ bound while we are only able to prove $O(\sqrt{q^2/2^n})$.
The second result, by Alagic, Bai, Katz, Majenz, and Struck~\cite{EC:ABKMS24}, generalizes the result when a reprogrammed point comes from an adaptively chosen, high entropy distribution.
For both, we identify that the gap between our success in proving the GHHM result and inability to prove these results stems from lacking techniques for expressing quantum random permutations sparsely.

\subsection{Implications of the Results}
Formally, the claim that Fixed-Permutation O2H implies any of these other theorems is essentially tautological.
The results were already unconditionally proven to be true, so it is vacuously the case that any statement implies them.
The essence of the result is not that the implications hold, but rather that the proofs thereof are straightforward and require almost exclusively classical reasoning.
(The quantum complexity is instead hidden inside of the O2H result we take as assumed.)

There are two ways we imagine this being used in future work. 
If one likes to have a ``toolbox'' of adaptive reprogramming results each targeted narrowly at a particular type of problem that they are well suited to expressing, then our results show that the Fixed-Permutation O2H is useful to build such tools.
Alternatively, because of the simplicity of our proofs, one could choose to jettison the use of individual adaptive reprogramming results and instead use Fixed-Permutation O2H directly in security proofs as a single powerful ``multi-tool''.\footnote{As an example, we note that several works~\cite{C:DFHS24,AC:GHHM21,PKC:KosXag24,ACISP:YuaSunTak24} use both GHHM's result and O2H theorems from AHU.}

In subsequent work, Liao, Ge, and Xue~\cite{EPRINT:LiaJiaXue25} introduce a Double-Oracle Fixed-Permutation O2H and show it can replace the Fixed-Permutation O2H in some proofs.
This improves the bound in the GHHM theorem and matches the bound in the Alagic, Bai, Katz, and Majenz theorem.

\iffull
\else
\subsection{Overview}
In \secref{sec:prelims} we describe notation conventions, summarize necessary background on quantum computation, and provide the Fixed-Permutation O2H lemma we use throughout the paper.
In \secref{sec:panzen} we prove that the lemma implies the adaptive programming result of Pan and Zeng~\cite{PKC:PanZen24}.
In \secref{sec:ghhm} we prove that the lemma implies the tight adaptive programming result of GHHM~\cite{AC:GHHM21}.
In \secref{sec:unruh} we prove that the lemma implies two adaptive one-way-to-hiding results of Unruh~\cite{C:Unruh14,EC:Unruh15}.
We conclude in \secref{sec:abkh} by discussing the challenges in proving the random permutation resampling results of Alagic, Bai, Katz, and Majenz~\cite{EC:ABKM22} or Alagic, Bai, Katz, Majenz, and Struck~\cite{EC:ABKMS24}.
\fi


\section{Preliminaries}\label{sec:prelims}

\heading{Notation}
We write $y\getsr\advA[O](x)$ for randomized execution of $\advA$ with input $x$ and oracle access to $O$ which produces output $y$.
We consider quantum $\advA$ that can access $O$ in superposition.
We let $\Pr[\sG]$ denote the probability that game $\sG$ returns $\true$.
Registers are implicitly initialized to store the all zero string.

If $\mathcal{S}$ is a set, then $y\getsr\mathcal{S}$ denotes sampling $y$ uniformly from $\mathcal{S}$.
We let $\Func(n,m)$ denote the set of all functions mapping $\bits^n$ to $\bits^m$.
Sampling $H \getsr \Func(n,m)$ gives a uniform random function.

\subsection{Quantum Computation Background}\label{sec:quant-back}

We assume familiarity with basic quantum computation, as performing unitary operations on registers which each contain a fixed number of qubits that can be measured in the computational basis.
Our main results are primarily based on a ``one-way to hiding'' theorem (defined soon) which when treated as a blackbox allow us to primarily think ``classically''.
We summarize the most important ideas used in our proofs.

\heading{Computing Permutations}
If $P$ is a permutation, then there is a quantumly computable unitary $U_P$ which maps according to $U_P\ket{x} = \ket{P(x)}$ for $x\in\bits^n$.
The runtime of this unitary grows with the time required to compute both $P$ and $P^{-1}$ classically.
We write $P$ in place of $U_P$.
If $f:\bits^n\to\bits^m$ is a function, we define the permutation $f[{\xor}]$ by $f[{\xor}](x,y)=(x,f(x)\xor y)$.

We define permutations $\procOrac$ that will be provided as (quantum accessible) oracles using the notation:
\oneColNoBox{
\underline{$\procOrac(X_1,\dots, X_m : Z_1,\dots, Z_n)$}\\
//Code updating $X_1,\dots,X_m$ and $Z_1,\dots,Z_n$\\
Return $(X_1,\dots,X_m  : Z_1,\dots,Z_n)$\smallskip
}
Writing $\advA[O]$ will denote the variables/registers before the colon ($X_1,\dots,X_m$) are controlled by $\advA$ while variables/registers after the colon ($Z_1,\dots,Z_n$) are controlled by its surrounding security game.
Writing $\advD[O_{\Leftrightarrow}]$ will denote that $\advD$ controls all of the variables/registers.
We use this for one-way-to-hiding distinguishers (which when used for reductions will internally run code of \emph{both} the attacker and the game the attacker expects to interact with).

\heading{Principle of Deferred Measurement}
In proofs, we find it convenient to defer any classical measurements until the end of execution by writing the result of the measurement (in superposition) into an auxiliary register that will otherwise be unused.
If at the end of execution we only care about the measured value of some other register, then the measurement can be deferred indefinitely.

\heading{Sparse Representation of a Uniformly Random Function}
In two proofs we make use of Zhandry's~\cite{C:Zhandry19} technique for representing random functions with sparse tables.
It lets us replace access to an oracle for which $(x,y)$ is mapped to $(x, y \xor T[x])$ for a uniformly random $T$ with an oracle for which $(x,y)$ is unchanged and $T[x]$ is updated by $T[x] \gets T[x] \xor y$ for an initially all-zero $T$.

The full compressed oracle technique of Zhandry combines the above with the ability to represent such a sparse table compactly.
We do not need this compactness.
When we first make use of this approach we will, in an appendix, provide a rigorous breakdown of the technique into individual steps for readers unfamiliar with the technique.

Unruh~\cite{AC:Unruh23} gives a generalization of Zhandry's technique, exhibiting that the particular choices of using the Fourier domain or the Hadamard transform are inessential for the result.
However, for our analysis it is more convenient to concretely work with the Hadamard transform because it allows us to concretely write simple pseudocode for the sparse random oracle. 
Unruh's presentation does not seem to easily allow this.

\ifnum\eprintversion=0
        \newcommand{\lwid}{0.4}
        \newcommand{\mwid}{}
        \newcommand{\rwid}{0.45}
\else
        \newcommand{\lwid}{0.16}
        \newcommand{\mwid}{}
        \newcommand{\rwid}{0.29}
\fi

\heading{(Fixed-Permutation) One-way to Hiding}
Ambainis, Hamburg, and Unruh (AHU)~\cite{C:AmbHamUnr19} proved a ``one-way to hiding'' (O2H) theorem which bounds the ability of a distinguisher to distinguish between two oracles by the probability that the distinguisher can be used to find an input on which the two oracles differ.
Their theorem and typical uses thereof consider distributions over the oracles.
We focus on using a variant where the oracles are permutations that are fixed ahead of time as first used by Jaeger, Song, and Tessaro~\cite{TCC:JaeSonTes21}.
\begin{wrapfigure}{r}{0.33\textwidth}\small
\oneCol{0.29}
{
\underline{Game $\sG^{\dist}_{O}(\advD)$}\\
$b\getsr\advD[O_{\Leftrightarrow}]$\\
Return $b=1$\\[4pt]
\underline{Game $\sG^{\guess}_{P,P'}(\advD)$}\\
$i\getsr\sett{1,\dots,q}$\\
Run $\advD[P_{\Leftrightarrow}]$ until its $i$-th query\\
Measure the input $x$ to this query\\
Return $(P(x)\neq P'(x))$\smallskip
}
\caption{Games used for O2H theorems}
\label{fig:reg-ahu}
\end{wrapfigure}
We will define both and show that they are essentially equivalent. 

Consider the game $\sG^{\dist}$ defined in \figref{fig:reg-ahu} wherein the distinguisher $\advD$ is given access to an oracle $O$ and then outputs a bit $b$.
For $e\in\sett{1,1/2}$, we measure the ability of $\advD$ to distinguish between permutations $P$ and $P'$ by
\[\Adv^{\dist}_{P,P',e}(\advD)=\left(\Pr[\sG^{\dist}_{P}(\advD)]\right)^e-\left(\Pr[\sG^{\dist}_{P'}(\advD)]\right)^e.\]

The O2H theorem bounds this in terms of the game $\sG^{\guess}$ shown in the same figure.
There the distinguisher is run with access to oracle $P$.
One of its oracle queries (chosen at random) is measured and the game returns $\true$ is the permutations $P$ and $P'$ would give different outputs on this input.
We define \[\Adv^{\guess}_{P,P'}(\advD)=\Pr[\sG^{\guess}_{P,P'}(\advD)].\]

\begin{theorem}[Fixed-Permutation O2H, \cite{TCC:JaeSonTes21}]\label{thm:reg-ahu}\label{o2h}
	Let $P,P'$ be permutations, $\advD$ be an distinguisher making at most $q$ oracle queries, and $e\in\sett{1,1/2}$.
	Then \[\abs{\Adv^{\dist}_{P,P',e}(\advD)}\leq 2q\sqrt{\Adv^{\guess}_{P,P'}(\advD)}.\]
\end{theorem}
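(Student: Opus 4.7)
The plan is to reprove this by a quantum hybrid argument on Euclidean state distance, essentially transcribing the Ambainis-Hamburg-Unruh / Jaeger-Song-Tessaro O2H proof into the fixed-permutation setting. The goal is to bound $\|\ket{\psi_q} - \ket{\phi_q}\|$, where $\ket{\psi_q}$ and $\ket{\phi_q}$ are the final joint states of $\advD$ with its oracle register under oracles $U_P$ and $U_{P'}$, respectively; both distinguishing-advantage bounds ($e=1$ and $e=1/2$) will then drop out of this single state-distance bound via standard pure-state inequalities.

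First I would decompose the $q$-query execution as an alternating product of internal unitaries $V_0, \dots, V_q$ and oracle calls, and introduce the hybrid $\ket{\chi_i}$ whose first $i$ oracle applications are $U_P$ and whose remaining $q - i$ are $U_{P'}$, so that $\ket{\chi_q} = \ket{\psi_q}$ and $\ket{\chi_0} = \ket{\phi_q}$. The key identity is $U_P - U_{P'} = (U_P - U_{P'}) \Pi_S$, where $\Pi_S$ projects the oracle input register onto the differing set $S = \sett{x : P(x) \neq P'(x)}$: both unitaries act identically on $\ket{x}$ whenever $x \notin S$. Unitarity of every $V_j$ and of the unaffected oracle calls yields $\|\ket{\chi_i} - \ket{\chi_{i-1}}\| \leq 2\|\Pi_S \ket{\alpha_i}\|$, where $\ket{\alpha_i}$ is the state immediately before the $i$-th oracle call in the all-$P$ execution. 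Telescoping together with Cauchy-Schwarz gives
\[
\|\ket{\psi_q} - \ket{\phi_q}\| \;\leq\; 2\sum_{i=1}^{q} \|\Pi_S \ket{\alpha_i}\| \;\leq\; 2\sqrt{q}\,\sqrt{\sum_{i=1}^{q} \|\Pi_S \ket{\alpha_i}\|^2}.
\]
By deferred measurement the inner sum equals $q \cdot \Adv^{\guess}_{P,P'}(\advD)$, since $\sG^{\guess}$ samples $i$ uniformly from $\sett{1,\dots,q}$, runs $\advD$ with oracle $P$ up to its $i$-th query, and tests whether a measurement of the input lies in $S$, an event with probability $\|\Pi_S\ket{\alpha_i}\|^2$ conditioned on the chosen $i$. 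This yields $\|\ket{\psi_q} - \ket{\phi_q}\| \leq 2q\sqrt{\Adv^{\guess}_{P,P'}(\advD)}$.

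Finally, to transfer this to distinguishing advantage, let $\Pi_1$ project $\advD$'s output register onto $1$. The $e = 1/2$ case follows immediately from the reverse triangle inequality, $|\sqrt{p_0} - \sqrt{p_1}| = \bigl| \|\Pi_1 \ket{\psi_q}\| - \|\Pi_1 \ket{\phi_q}\| \bigr| \leq \|\Pi_1(\ket{\psi_q} - \ket{\phi_q})\| \leq \|\ket{\psi_q} - \ket{\phi_q}\|$. The $e = 1$ case uses the pure-state trace-distance identity $|p_0 - p_1| \leq \sqrt{1 - |\langle\psi_q|\phi_q\rangle|^2}$, combined with $\sqrt{1 - c^2} \leq \sqrt{2(1-c)} \leq \|\ket{\psi_q} - \ket{\phi_q}\|$ for $c = |\langle\psi_q|\phi_q\rangle|$. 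I expect the $e = 1$ step to be the main subtlety: a naive expansion of $\langle\psi_q|\Pi_1|\psi_q\rangle - \langle\phi_q|\Pi_1|\phi_q\rangle$ into mixed inner products loses a factor of two, so one must route the argument through the fidelity-based trace-distance bound to obtain the sharp constant $2q$ rather than a loose $4q$.
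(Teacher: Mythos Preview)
Your proof is correct and self-contained, but it is not how the paper establishes the theorem. The paper does not prove \thref{thm:reg-ahu} directly; it cites the result from Jaeger--Song--Tessaro and then, in Proposition~1, shows that \thref{thm:reg-ahu} and the original AHU theorem (\thref{thm:reg-ahu-orig}) imply each other up to constant factors. Concretely, given $P,P',\advD$, the paper builds a distribution $\distD$ that always outputs $(P_\pm,P'_\pm,\advD')$ where $P_\pm$ encodes both $P$ and $P^{-1}$, and $\advD'$ simulates each of $\advD$'s permutation queries using two $f[\oplus]$-style queries; applying \thref{thm:reg-ahu-orig} to this $\distD$ yields \thref{thm:reg-ahu} with constant $4q$ rather than $2q$. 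The sharp constant $2q$ is only obtained by the remark that ``a direct emulation of the original proof for \thref{thm:reg-ahu-orig} in~\cite{C:AmbHamUnr19} gives the constant claimed,'' which the paper does not spell out.

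Your hybrid argument is exactly that direct emulation. It has the advantage of being self-contained and yielding the sharp $2q$ without detour, whereas the paper's reduction is shorter but loses a factor of~$2$ and treats the AHU result as a black box. One small point: your final states $\ket{\psi_q},\ket{\phi_q}$ are pure only after purifying any internal randomness of $\advD$, which you implicitly assume; this is standard but worth stating.
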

Note that the two permutations are a priori fixed.
Assuming $P\neq P'$, there trivially exist distinguishers which can distinguish between the two permutations by simply querying them on an input where they differ.
Thus, when making productive use of this theorem we will always be considering some restricted class of distinguishers.
Generally, the distinguisher will internally be running both an adversary and a security game the adversary is interacting with. 
The permutations will be used to process when the adversary makes an oracle query to its game, with the distinguisher's query to the permutation consisting both of query variables chosen by the attacker and internal state variables of the security game.

The original AHU result considered distributions over oracles of the form $f[\xor]$, rather than arbitrary permutations.
Let $\distD$ be a distribution over $(f,f',\advD)$ where $f,f'$ are functions and $\advD$ is a distinguisher (for comparison to AHU's original statements, think of it as a fixed distinguisher on input a random string $z$).\footnote{AHU's original statement had $\distD$ pick a set $S$ satisfying $f'(x)=f(x)$ for all $x\not\in S$ and defined $\sG^{\guess}$ to check whether $x\in S$. Our formalism fixes the special case $S=\{x~:~f'(x) \neq f(x)\}$. As this is the smallest possible choice of $S$, it only makes $\sG^{\guess}$ harder and hence implies the results for general $S$.}
Then we define the advantage functions as follows 
where the expectations are over $(f,f',\advD)\getsr\distD$,
\begin{align*}
	\Adv^{\dist}_{e}(\distD)&=\left(\mathop{\mathbb{E}}_{}\left[\Pr[\sG^{\dist}_{f[\xor]}(\advD)]\right]\right)^e-\left(\mathop{\mathbb{E}}_{}\left[\Pr[\sG^{\dist}_{f'[\xor]}(\advD)]\right]\right)^e \text{ and }\\
	\Adv^{\guess}_{}(\distD)&=\mathop{\mathbb{E}}_{}\left[\Pr[\sG^{\guess}_{f[\xor],f'[\xor]}(\advD)]\right].
\end{align*}

We capture their theorem as follows, then show it is essentially equivalent to \thref{thm:reg-ahu}.
\begin{theorem}[\cite{C:AmbHamUnr19}, Thm.~3]\label{thm:reg-ahu-orig}
	Let $\distD$ be a distribution as above where $\advD$ makes at most $q$ oracle queries.
	Let $e\in\sett{1,1/2}$.
	Then
	\[
		\abs{\Adv^{\dist}_{e}(\distD)}\leq 2q\sqrt{\Adv^{\guess}_{}(\distD)}.
	\]
\end{theorem}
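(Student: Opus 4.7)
The plan is to derive \thref{thm:reg-ahu-orig} from the already-established \thref{thm:reg-ahu} by ``purifying'' the distribution $\distD$ into a single fixed-permutation O2H instance on extended registers. I will construct a distinguisher $\advD^*$ that maintains auxiliary registers $F, F', R$ initialized in the superposition
\[
	\sum_{(f, f', r)} \sqrt{\Pr_{\distD}[f, f', r]}\,\ket{f, f', r},
\]
where $r$ captures the randomness (or auxiliary input) selecting the realization of $\advD$. Then $\advD^*$ runs the unitary form of $\advD$ controlled on $R$, and each time $\advD$ would issue a query on registers $(X, Y)$, $\advD^*$ forwards the joint tuple $(X, Y, F, F', R)$ to its own oracle.

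The fixed permutations are $P^*(x, y, f, f', r) = (x, y \xor f(x), f, f', r)$ and $P'^*(x, y, f, f', r) = (x, y \xor f'(x), f, f', r)$; both are clearly permutations and touch the $F, F', R$ registers only as read-only controls. By the principle of deferred measurement applied to $F, F', R$, the output statistics of $\advD^*$ with oracle access to $P^*$ match the distributional experiment in which $(f, f', \advD)$ is first sampled classically from $\distD$ and then $\advD$ is run with oracle $f[\xor]$. Hence $\Pr[\sG^{\dist}_{P^*}(\advD^*)] = \mathop{\mathbb{E}}_{\distD}[\Pr[\sG^{\dist}_{f[\xor]}(\advD)]]$ and similarly for $P'^*$, giving $\Adv^{\dist}_{P^*, P'^*, e}(\advD^*) = \Adv^{\dist}_{e}(\distD)$ for both $e \in \sett{1, 1/2}$.

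For the guessing game, measuring the input to a uniformly chosen query of $\advD^*$ produces a classical tuple $(x, y, f, f', r)$, and the event $P^*(x, y, f, f', r) \neq P'^*(x, y, f, f', r)$ reduces exactly to $f(x) \neq f'(x)$. So $\Adv^{\guess}_{P^*, P'^*}(\advD^*) = \Adv^{\guess}(\distD)$. Invoking \thref{thm:reg-ahu} on $(\advD^*, P^*, P'^*)$ yields the claimed inequality, and since $\advD^*$ makes one oracle query per query of $\advD$ the bound $2q\sqrt{\Adv^{\guess}(\distD)}$ is preserved. The reverse direction is immediate: applying \thref{thm:reg-ahu-orig} to the point distribution on a single $(f, f', \advD)$ recovers \thref{thm:reg-ahu} restricted to permutations of the form $f[\xor]$, which covers all applications we consider in the paper.

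The main obstacle is the preliminary reformulation needed before purification applies cleanly. Since $\advD$ may include internal classical measurements and randomness, one must first consolidate all classical randomness into a single register sampled at the start and defer all internal measurements to the end, so that $\advD$ becomes a pure unitary which $\advD^*$ can invoke controlled on its $R$ register. Once this standard step is carried out, the rest of the argument reduces to a straightforward side-by-side comparison of the game frames $\sG^{\dist}$ and $\sG^{\guess}$ in the two settings, with careful attention to which registers belong to $\advD^*$ (all of them) versus which are controlled by the original distinguisher $\advD$ (only $X$ and $Y$).
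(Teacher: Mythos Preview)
Your proposal is correct and follows essentially the same route as the paper's proof of the implication $\thref{thm:reg-ahu}\Rightarrow\thref{thm:reg-ahu-orig}$: build a distinguisher $\advD^*$ that carries $(f,f')$ in extra registers and forwards $(X,Y,f,f')$ to the fixed permutations $P^*(x,y,f,f')=(x,y\xor f(x),f,f')$ and $P'^*(x,y,f,f')=(x,y\xor f'(x),f,f')$. The only stylistic difference is that you purify $\distD$ into a superposition and invoke deferred measurement, whereas the paper simply has $\advD^*$ sample $(f,f',\advD)\getsr\distD$ classically at the start; since the permutations never touch $F,F',R$ except as controls, both presentations yield the identical equalities $\Adv^{\dist}_{P^*,P'^*,e}(\advD^*)=\Adv^{\dist}_e(\distD)$ and $\Adv^{\guess}_{P^*,P'^*}(\advD^*)=\Adv^{\guess}(\distD)$.

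One remark on your closing paragraph: your reverse direction (point distribution) only recovers \thref{thm:reg-ahu} for permutations of the form $f[\xor]$. The paper's reverse direction handles \emph{arbitrary} permutations $P,P'$ via the trick of defining $P_\pm(d,X)=P(X)$ or $P^{-1}(X)$ and simulating an in-place permutation call by two $f[\xor]$-style queries, incurring a factor of $2$. This is not needed to establish \thref{thm:reg-ahu-orig} itself, but it is what justifies the paper's claim that the two theorems are equivalent up to constants.
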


\begin{proposition}
	\thref{thm:reg-ahu} and \thref{thm:reg-ahu-orig} directly imply each other (up to constant factors).
\end{proposition}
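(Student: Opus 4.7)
The plan is to prove the two implications separately, with the main idea being to absorb the distribution in Theorem~\ref{thm:reg-ahu-orig} into the input registers of a single fixed permutation.

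For Theorem~\ref{thm:reg-ahu} $\Rightarrow$ Theorem~\ref{thm:reg-ahu-orig}, I would parameterize the sample space of $\distD$ by an underlying randomness $r$ occurring with probability $\pr_r$, writing $(f_r, f'_r, \advD_r)$ for the triple sampled. Define fixed permutations on triples $(r,x,y)$ by $P(r,x,y) = (r, x, y \oplus f_r(x))$ and $P'(r,x,y) = (r, x, y \oplus f'_r(x))$, and let $\advD^\ast$ be the distinguisher that classically samples $r$ with probability $\pr_r$, then runs $\advD_r$ internally and intercepts its queries: whenever $\advD_r$ queries $(X, Y)$, the distinguisher $\advD^\ast$ forwards $(r, X, Y)$ to its own oracle and returns the $(X,Y)$ part of the response. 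Because $\advD^\ast$ controls all the input registers, this fits the fixed-permutation setup. The simulation is perfect, so $\Pr[\sG^{\dist}_P(\advD^\ast)] = \Exptwo{(f,f',\advD)\getsr\distD}{\Pr[\sG^{\dist}_{f[\xor]}(\advD)]}$, and likewise for $P'$. For the guessing game, measuring a random query input of $\advD^\ast$ reveals $(r,x,y)$, and $P, P'$ differ on this input exactly when $f_r(x) \neq f'_r(x)$; hence $\Adv^{\guess}_{P, P'}(\advD^\ast) = \Adv^{\guess}(\distD)$. Applying Theorem~\ref{thm:reg-ahu} to $(P, P', \advD^\ast)$ with query count $q$ then yields the bound of Theorem~\ref{thm:reg-ahu-orig} for both values of $e$, with no loss in constant factors.

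For the converse, given fixed permutations $P, P'$ and a distinguisher $\advD$, take the point-mass distribution $\distD$ that deterministically returns $(P, P', \advD)$ (viewing $P, P'$ as functions). Theorem~\ref{thm:reg-ahu-orig} then bounds the advantage of distinguishing the oracles $U_{P[\xor]}$ and $U_{P'[\xor]}$. The remaining step is to transfer this bound to the oracles $U_P, U_{P'}$ considered in Theorem~\ref{thm:reg-ahu}; since each query to $U_{P[\xor]}$ can be simulated by a constant number of queries to $U_P$ (with an ancilla and, where useful, an inverse query, which is available because classical computability of a permutation and its inverse was assumed in the quantum computation background), and conversely, the two statements match up to a constant factor in the query count $q$, which is absorbed by the ``up to constant factors'' qualifier.

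The main obstacle is the converse direction: arbitrary permutation oracles and $f[\xor]$-style oracles act on different register layouts, so strict equivalence only holds after a constant-overhead simulation, and keeping the accounting tidy is the only subtlety. The forward direction -- embedding the distribution's randomness into the input register of a fixed permutation -- is the main conceptual step, and is exactly the technique used throughout the paper to derive adaptive reprogramming results from the nonadaptive fixed-permutation lemma.
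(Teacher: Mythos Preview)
Your forward direction (Theorem~\ref{thm:reg-ahu} $\Rightarrow$ Theorem~\ref{thm:reg-ahu-orig}) is correct and essentially identical to the paper's: both absorb the randomness of $\distD$ into extra input registers of a single fixed permutation, then run the sampled distinguisher internally.

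Your converse direction has a genuine gap. Taking the point-mass distribution on $(P,P',\advD)$ gives you, via Theorem~\ref{thm:reg-ahu-orig}, a bound for a distinguisher with oracle access to $P[\xor]$ versus $P'[\xor]$. You then need to simulate $U_P$ (what $\advD$ expects) from oracle access to $P[\xor]$. Querying $(x,0)$ yields $(x,P(x))$, but uncomputing the first register requires $P^{-1}$ of the \emph{unknown} oracle function. Your appeal to ``classical computability of a permutation and its inverse'' does not help: that background remark is about building $U_P$ when you know $P$ explicitly, whereas here the reduction only has black-box access to $f[\xor]$ with $f\in\{P,P'\}$ unknown, so it cannot simply compute $f^{-1}$.

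The paper fixes exactly this by choosing a different function for the point-mass distribution: it sets $f = P_{\pm}$ where $P_{\pm}(d,X)$ equals $P(X)$ if $d=1$ and $P^{-1}(X)$ if $d=0$ (and likewise $P'_{\pm}$). Then $P_{\pm}[\xor]$ gives oracle access to both $P[\xor]$ and $P^{-1}[\xor]$, and two queries --- one forward, one inverse --- suffice to simulate a single $U_P$ query on the state $(x,0)$, mapping it to $(P(x),0)$ after a swap. This is why the paper's reduction uses $2q$ queries and incurs the constant-factor loss. Your sketch gestures at inverse queries but misidentifies why they are available; the $P_{\pm}$ packaging is the missing idea.
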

\begin{proof}
	Let $\distD$ be given.
	Then define the permutations 
		\begin{align*}
			P(X,Y:f,f') &= (X,f(X)\xor Y:f,f')\\
			P'(X,Y:f,f') &= (X,f'(X)\xor Y:f,f').
		\end{align*}
	We views these as permutations applied to four inputs (three of which are output unmodified by the permutation).
	Then let $\advD^{\ast}$ sample $(f,f',\advD)\gets\distD$ and start running $\advD$ internally.
	Whenever $\advD$ makes an oracle query with registers $(X,Y)$, the distinguisher $\advD^{\ast}$ will query its oracle with $(X,Y:f,f')$.
	When $\advD$ halts and outputs $b$, $\advD^{\ast}$ halts and outputs $b$ as well.
	It is clear that
		\begin{align}
			\Pr[\sG^{\dist}_{P}(\advD^{\ast})] &= \mathop{\mathbb{E}}\left[\Pr[\sG^{\dist}_{f[\xor]}(\advD)]\right],	
			\Pr[\sG^{\dist}_{P'}(\advD^{\ast})] = \mathop{\mathbb{E}}\left[\Pr[\sG^{\dist}_{f'[\xor]}(\advD)]\right], \text{ and } \nonumber \\
			&\Pr[\sG^{\guess}_{P,P'}(\advD^{\ast})] = \mathop{\mathbb{E}}\left[\Pr[\sG^{\guess}_{f[\xor],f'[\xor]}(\advD)]\right]\label{eq:one}
		\end{align}
	and $\advD^{\ast}$ makes $q$ queries. Hence \thref{thm:reg-ahu} implies \thref{thm:reg-ahu-orig}.
	
	Now let $P$, $P'$, and $\advD^{\ast}$ be given where $\advD^{\ast}$ makes $q$ oracle queries.
	Define $\distD$ to be the distribution which always outputs $(P_{\pm},P'_{\pm},\advD)$ where $P_{\pm}(d,X) = P(X)$ if $d=1$ and $P^{-1}(X)$ if $d=0$.
	Permutation $P'_{\pm}$ is defined likewise.
	Distinguisher $\advD$ runs $\advD^{\ast}$ internally.
	Whenever $\advD^{\ast}$ makes an oracle query to its oracle with register $X$, $\advD$ prepares a register $Y=0^{|X|}$, queries $O[\oplus](1,X,Y)$ and $O[\oplus](0,Y,X)$ then swaps $X$ and $Y$ before returning $X$ to $\advD^{\ast}$.
	If the permutation underlying $O$ was $F$ and $(X,Y)$ initially contained $(x,0^{|x|})$, then the two oracle queries result in them containing $(0^{|x|},F(x))$, and so the final swap result is $(F(x),0^{|x|})$. 
	Because the second register began an ended as containing all $0$'s, this is indistinguishable from an oracle that just applied $F$ to $X$ from $\advD$'s perspective.
	When $\advD^{\ast}$ halts and outputs $b$, $\advD$ halts and outputs $b$ as well.
	
	The equalities in Eq.~\ref{eq:one} hold again, but now $\advD$ makes $2q$ oracle queries.
	Hence \thref{thm:reg-ahu-orig} implies \thref{thm:reg-ahu} up to an additional multiplicative factor of 2 being added to the latter theorem's bound.\qed
\end{proof}
As claimed by Jaeger, Song, and Tessaro~\cite{TCC:JaeSonTes21}, a direct emulation of the original proof for \thref{thm:reg-ahu-orig} in~\cite{C:AmbHamUnr19} gives the constant claimed in \thref{thm:reg-ahu}.

\heading{Parallel Query Bounds}
The original result of AHU is stronger than what is stated above because it provides better bounds against attackers that make queries in parallel. 
If the distinguisher makes up to $p$ parallel queries at a time over $d$ rounds, the above theorems would give us bounds of the form $2pd\sqrt{\Adv^{\guess}}$.
Instead, AHU show that its advantage grows with $2d\sqrt{\Adv^{\guess\mbox{-}\mathsf{par}}}$ where this advantage considers picking $i \getsr \sett{1,\dots,d}$, measuring all of the distinguisher's queries in the $i$-th round, and checking if any of them satisfy $(P(x)\neq P'(x))$.

We omit considering parallel queries for simplicity, but this form of bound will similarly hold for the Fixed-Permutation O2H.
It can naturally be extended to give parallel-query versions of our results in \secref{sec:panzen}, \secref{sec:unruh-one}, and \secref{sec:abkh}.
It would not naturally give meaningful parallel-query versions of our results in \secref{sec:ghhm} or \secref{sec:unruh2}.
The issue with these is that they apply the ``backwards bounds'' technique to provide improved bounds.
However, this then results in applying the Fixed-Permutation O2H to programming queries rather than random oracle queries.
Consequently, we would only get improved bounds for programming queries made in parallel, which does not seem as useful.


\section{Pan-Zeng Adaptive Reprogramming Framework}\label{sec:panzen}
In this section we prove that (a version of) the Pan-Zeng framework for computational adaptive reprogramming~\cite{PKC:PanZen24} is directly implied by the Fixed-Permutation O2H (\thref{o2h}).
We start by recalling their framework in \secref{sec:pz-frame}.
In \secref{sec:pz-implied}, we state and prove our variant of the framework.
The bounds provided by the results are complex and hard to compare. 
In \secref{sec:pz-comparisons} we apply the same simplifications that Pan and Zeng use when applying their result in theorems and show that our theorem provides better concrete bounds in this case.

\subsection{Pan-Zeng Framework and Security Theorem}\label{sec:pz-frame}
In the Pan-Zeng framework for computational adaptive reprogramming we consider multi-stage adversary $\advA=(\advA_0,\dots,\advA_n)$ trying to distinguish a nonadaptive world from an adaptive world.
This world is parameterized by an environment $\env$ which specifies $\mathsf{Init}$, $\mathsf{Orac}$, and $\mathsf{Repro}$.
The interactions are defined by the game $\sG^{\pzdist}$ defined in \figref{fig:pz-games}.

\ifnum\eprintversion=0
        \renewcommand{\lwid}{0.4}
        \renewcommand{\mwid}{}
        \renewcommand{\rwid}{0.45}
\else
        \renewcommand{\lwid}{0.23}
        \renewcommand{\mwid}{}
        \renewcommand{\rwid}{0.33}
\fi
\begin{figure}[t]
\twoColsNoDivide{\lwid}{\rwid}
{
\underline{Game $\sG^{\pzdist}_{\env,b}(\advA)$}\\
$(\mathsf{Init},\mathsf{Orac},\mathsf{Repro})\gets\env$\\
$(s,x,H_0,H_1)\getsr\mathsf{Init}$\\
$y \getsr \advA_0[H_b[\xor]](x)$\\
For $i=1,\dots,n$ do\\
\ind $(x,a)\getsr \mathsf{Orac}(s,y)$\\
\ind $H_1 \gets \mathsf{Repro}(s,a,H_1)$\\
\ind $y \getsr \advA_i[H_b[\xor]](x)$\\
Return $y=1$\smallskip
}
{
\underline{Game $\sG^{\pzguess}_{\env,i}(\advA)$}\\
$t\getsr\sett{1,\dots,q_i}$\\
Run $\sG^{\pzdist}_{\env,1}$ until $\advA_i$ is initiated\\
Run $\advA_i[H_1[\xor]](x)$ until its $t$-th query\\
Measure the input $X$ to this query\\
Return $(H_0(X)\neq H_1(X))$\smallskip
}
\caption{Games used for the Pan-Zeng computational adaptive reprogramming framework. Different stages of $\advA$ implicitly share state.}
\label{fig:pz-games}
\hrulefill
\end{figure}

First $\mathsf{Init}$ samples parameter string $s$ (later given to $\mathsf{Orac}$ and $\mathsf{Repro}$ to make them act in a coordinated manner), an initial input $x$ for $\advA$, and two functions $H_0$ and $H_1$.
In the nonadaptive world ($b=0$) each stage of $\advA$ is given oracle access to $H_0[\oplus]$, produces outputs $y$, and is given inputs $x$ from $\mathsf{Orac}$.
In the adaptive world ($b=1$) it is instead given access to $H_1[\oplus]$ and this function $H_1$ is adaptively updated by $\mathsf{Repro}$ in between each stage of $\advA$ based on an auxiliary string passed to it by $\mathsf{Orac}$.
In particular, $\advA$ passing $y$ to $\mathsf{Orac}$ represents $\advA$ having some sort of influence over how $H_1$ will be reprogrammed and $\mathsf{Orac}$ passing $x$ back represents some sort of leakage that $\advA$ learns about how its influence work.
The auxiliary information $a$ passed from $\mathsf{Orac}$ to $\mathsf{Repro}$ contains ``instructions'' on specifically how $\mathsf{Repro}$ should reprogram $H_1$.
The concrete details of how $\mathsf{Init}$, $\mathsf{Orac}$, and $\mathsf{Repro}$ will depend specifically on what the Pan-Zeng framework is being used to analyze.
However, we are analyzing the framework as a whole and so can be agnostic to the specifics of what they do. 

Note that the only quantum behavior of this game is internal computation by $\advA$ and its superposition queries to its $H_b[\xor]$ oracle.
For $e\in\sett{1,1/2}$, we define $\Adv^{\pzdist}_{\env,e}(\advA)
=
\left(\Pr[\sG^{\pzdist}_{\env,1}(\advA)]\right)^e
-
\left(\Pr[\sG^{\pzdist}_{\env,0}(\advA)]\right)^e.$

The Pan-Zeng framework (similar to O2H results) bounds the distinguishing advantage of $\advA$ in relation to an experiment where one of $\advA$'s queries are measured at random and we see if that query differentiates the two oracles.
Let $i\in\{0,\dots,n\}$ and define $q_i$ to be the number of oracle queries that $\advA_i$ makes.
This is captured by the game $\sG^{\pzguess}$ which is parameterized by $\env$ and the choice of $i$.
In it, we run $\advA$ in the adaptive world until stage $i$.
A random one of its queries in that stage are measured to see if $H_0$ and $H_1$ differ on that input.
We define $\Adv^{\pzguess}_{\env,i}(\advA)=\Pr[\sG^{\pzguess}_{\env,i}(\advA)]$.

Pan and Zeng proved the following (Lemma 2, proceedings version or Lemma 3.1, \href{https://eprint.iacr.org/2023/1682}{ePrint version}).
\begin{theorem}[Pan-Zeng Adaptive Reprogramming, \cite{PKC:PanZen24}]\label{thm:pz}
	Let $\env$ be an environment and $\advA$ be an adversary for which $\advA_i$ makes at most $q_i$ oracle queries.
	Then
	\[
		\abs{\Adv^{\pzdist}_{\env,1}(\advA)}
		\leq
		\sum_{k=0}^n \sum_{i=0}^k 2 q_i \sqrt{\Adv^{\pzguess}_{\env,i}(\advA)}.	\]
\end{theorem}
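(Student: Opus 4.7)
My plan is to apply a hybrid argument, invoking the Fixed-Permutation O2H (\thref{o2h}) once per transition. Define hybrid games $G_0, G_1, \ldots, G_{n+1}$ where $G_k$ proceeds exactly like $\sG^{\pzdist}_{\env,1}(\advA)$ except that stages $0, \ldots, k-1$ of $\advA$ see the adaptively reprogrammed oracle $H_1[\oplus]$ while stages $k, \ldots, n$ see the unmodified oracle $H_0[\oplus]$; the calls to $\mathsf{Repro}$ still execute between stages so that the ``current $H_1$'' is maintained correctly in the early stages. By construction $G_0 = \sG^{\pzdist}_{\env,0}(\advA)$ and $G_{n+1} = \sG^{\pzdist}_{\env,1}(\advA)$, so by the triangle inequality it suffices to bound each $|\Pr[G_k] - \Pr[G_{k+1}]|$.

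To do so, I would apply \thref{o2h} with the two a priori fixed permutations
\begin{align*}
	P_0(X,Y : H_0, H_1) &= (X,\, Y \oplus H_0(X) : H_0, H_1),\\
	P_1(X,Y : H_0, H_1) &= (X,\, Y \oplus H_1(X) : H_0, H_1),
\end{align*}
where the colon marks that $H_0$ and $H_1$ are registers controlled by the O2H distinguisher rather than by $\advA$. The distinguisher $\advD$ samples $(s, x, H_0, H_1) \gets \mathsf{Init}$ internally, stores $H_0, H_1$ in its own registers, and simulates stages $0, \ldots, k-1$ of $\advA$ by answering oracle queries from its stored $H_1$ and running $\mathsf{Orac}, \mathsf{Repro}$ between stages to update that stored value. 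For stage $k$, $\advD$ forwards each query $(X,Y)$ made by $\advA_k$ to its own oracle as $(X,Y : H_0, H_1)$ and returns the $(X,Y)$ portion back to $\advA_k$. After stage $k$ ends, $\advD$ simulates the remaining stages using $H_0$ and outputs whatever $\advA$ outputs. This $\advD$ makes exactly $q_k$ oracle queries, and with oracle $P_b$ it perfectly reproduces $G_{k+b}$.

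The O2H guess experiment then runs $\advD$ under $P_1$ up to a random query of $\advA_k$, measures its input $X$, and returns $\true$ iff $P_0 \neq P_1$ on that input, which happens exactly when $H_0(X) \neq H_1(X)$ for the $H_1$ register holding the $k$-times-reprogrammed value. This matches $\sG^{\pzguess}_{\env,k}(\advA)$, so \thref{o2h} gives $|\Pr[G_k] - \Pr[G_{k+1}]| \leq 2 q_k \sqrt{\Adv^{\pzguess}_{\env,k}(\advA)}$. Summing yields the single-sum bound $\sum_{k=0}^{n} 2 q_k \sqrt{\Adv^{\pzguess}_{\env,k}(\advA)}$, which is termwise at most the stated double-sum bound and so proves the theorem (the same argument with $e=1/2$ goes through by triangle inequality on the square-root version of \thref{o2h}). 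The step needing the most care is verifying that $\advD$'s simulation faithfully reproduces $G_{k+b}$: the essential point is that $\mathsf{Init}, \mathsf{Orac}, \mathsf{Repro}$ and the between-stage outputs of $\advA$ are all classical, so the stored $H_1$ register can be updated classically between stages without disturbing the coherent queries to $P_b$ during stage $k$.
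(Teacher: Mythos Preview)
Your proposal is correct, but note that the paper does not itself prove \thref{thm:pz}; that statement is cited from Pan--Zeng. What the paper proves is the variant \thref{thm:pz-new}, and it does so by a genuinely different route than your hybrid argument.

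You hybridize over the stage index, applying the Fixed-Permutation O2H once per transition with a distinguisher that forwards only the queries of stage~$k$. This yields the single-sum bound $\sum_{k=0}^n 2q_k\sqrt{\Adv^{\pzguess}_{\env,k}(\advA)}$, which you then weaken to the stated double sum. The paper instead applies \thref{o2h} \emph{once}: its distinguisher runs the entire game $\sG^{\pzdist}_{\env,b}(\advA)$ internally and forwards the oracle queries of \emph{all} stages $\advA_0,\dots,\advA_n$ to its single oracle. This distinguisher makes $q=q_0+\cdots+q_n$ queries, and the resulting $\Adv^{\guess}$ decomposes as $\sum_k (q_k/q)\,\Adv^{\pzguess}_{\env,k}(\advA)$, giving the bound $2q\sqrt{\sum_k (q_k/q)\,\Adv^{\pzguess}_{\env,k}(\advA)}$.

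What each approach buys: your hybrid argument is the natural ``classical'' proof structure and makes the per-stage dependence transparent. The paper's single-shot approach is shorter (no hybrid chain) and yields a tighter bound, since by concavity of the square root $2q\sqrt{\sum_k (q_k/q)\epsilon_k} \leq \sum_k 2q_k\sqrt{\epsilon_k\cdot q/q_k}$ while your bound is $\sum_k 2q_k\sqrt{\epsilon_k}$; after the simplifications in \secref{sec:pz-comparisons} the paper's bound drops an extra factor of roughly $n$. Both approaches handle $e\in\{1,1/2\}$ and both dominate the original Pan--Zeng double-sum bound.
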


\subsection{The Pan-Zeng Theorem is Implied by O2H}\label{sec:pz-implied}
Now we state and prove a variant of \thref{thm:pz} which follows from the Fixed-Permutation O2H (\thref{o2h}).
\begin{theorem}\label{thm:pz-new}
	Let $\env$ be an environment and $\advA$ be an adversary for which $\advA_i$ makes at most $q_i\geq 1$ oracle queries.
	Let $q=q_0+\dots+q_n$.
	Then 
	for $e\in\sett{1,1/2}$
	it holds that
	\[
		\abs{\Adv^{\pzdist}_{\env,e}(\advA)}
		\leq
		2 q \sqrt{\sum_{i=0}^n \frac{q_i}{q} \Adv^{\pzguess}_{\env,i}(\advA)}.
	\]
\end{theorem}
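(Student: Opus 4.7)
The plan is to apply the Fixed-Permutation O2H (\thref{o2h}) with a distinguisher $\advD$ that internally simulates \emph{both} the multi-stage adversary $\advA$ and the Pan-Zeng environment $\env$, offloading only the actual oracle evaluations to its external O2H oracle. The entire adaptive structure (including the classical reprogramming by $\mathsf{Repro}$ and the information flow through $\mathsf{Orac}$) is handled by classical code inside $\advD$, so the only event that matters for O2H is measuring a query input at which $H_0$ and $H_1$ disagree.

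Concretely, define two permutations acting on combined registers by
\[
    P_b(X, Y : H_0, H_1) = (X,\, Y \xor H_b(X) : H_0, H_1),
\]
using the colon notation to mark registers that $\advD$ will set aside for $\advA$; all four registers are actually controlled by $\advD$. These permutations differ on input $(X,Y,H_0,H_1)$ exactly when $H_0(X) \neq H_1(X)$. Build $\advD$ so that it classically samples $(s,x,H_0,H_1) \getsr \mathsf{Init}$ and stores $H_0, H_1$ in registers; runs the stages of $\advA$ in sequence, forwarding each query of $\advA_i$ on attacker registers $(X,Y)$ as a query $(X, Y : H_0, H_1)$ to the O2H oracle, returning only $(X,Y)$ to $\advA_i$; between stages, classically computes $(x,a) \gets \mathsf{Orac}(s,y)$ and updates the $H_1$ register via $\mathsf{Repro}(s,a,H_1)$; and outputs whatever the last stage of $\advA$ outputs. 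When the O2H oracle is $P_b$, $\advA$'s view is identical to its view in $\sG^{\pzdist}_{\env,b}$, so $\Adv^{\dist}_{P_0,P_1,e}(\advD) = \Adv^{\pzdist}_{\env,e}(\advA)$; also $\advD$ makes exactly $q = q_0 + \dots + q_n$ queries.

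The remaining step is to analyze $\Adv^{\guess}_{P_0,P_1}(\advD)$. In $\sG^{\guess}_{P_0,P_1}(\advD)$ one of the $q$ queries is selected uniformly at random, and the execution halts after measurement, so the additional measurement of $Y$ (and of $H_0, H_1$, which are already in a classical basis state throughout the simulation since they are only touched by classical code) does not change the probability of the test $P_0(x) \neq P_1(x)$, which reduces to $H_0(X) \neq H_1(X)$. Conditioning on the measured query being the $t$-th query of $\advA_i$, $t$ becomes uniform on $\{1,\dots,q_i\}$, and the resulting experiment is exactly $\sG^{\pzguess}_{\env,i}(\advA)$. By the law of total probability,
\[
    \Adv^{\guess}_{P_0,P_1}(\advD) = \sum_{i=0}^n \frac{q_i}{q}\, \Adv^{\pzguess}_{\env,i}(\advA),
\]
and plugging this into \thref{o2h} yields the claimed bound. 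The main thing to get right is the simulation correctness: since $(s, H_0, H_1)$ remain in a basis state throughout (only classical updates), for each fixed value of the game state $P_b$ acts on $(X,Y)$ precisely as $H_b[\xor]$, so $\advA$'s view is faithfully reproduced in both worlds. Everything else is bookkeeping and a union/averaging over stages.
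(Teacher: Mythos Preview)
Your proof takes essentially the same approach as the paper's: define the two permutations that xor $H_0(X)$ or $H_1(X)$ into $Y$ while carrying $(H_0,H_1)$ as extra registers, build a distinguisher $\advD$ that simulates the entire Pan-Zeng game and forwards only the hash queries, and then partition the uniformly chosen query index by stage to recover $\sum_i (q_i/q)\,\Adv^{\pzguess}_{\env,i}(\advA)$.

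There is one bookkeeping slip. With your ordering $(P_0,P_1)$, the paper's $\sG^{\guess}_{P_0,P_1}(\advD)$ runs $\advD$ with oracle $P_0$, i.e., in the $b=0$ world where $H_1$ is never reprogrammed. But $\sG^{\pzguess}_{\env,i}(\advA)$ is defined to run $\sG^{\pzdist}_{\env,1}$ (the $b=1$ world) until stage $i$ and then measure. So your claimed equality $\Adv^{\guess}_{P_0,P_1}(\advD)=\sum_i (q_i/q)\,\Adv^{\pzguess}_{\env,i}(\advA)$ is not literally correct; the two sides run $\advA$ in different worlds. The paper avoids this by taking $P$ to be the $H_1$-based permutation and $P'$ the $H_0$-based one, so that both $\sG^{\guess}_{P,P'}$ and $\sG^{\pzguess}$ sit in the adaptive world. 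The fix is just to swap the order of your two permutations in the application of \thref{o2h}; the sign flip this induces on $\Adv^{\dist}$ is harmless since you take absolute values anyway.
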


\begin{proof}
	To apply the Fixed-Permutation O2H (\thref{o2h}) we will define appropriate $P$, $P'$, and $\advD$ from $\env$ and $\advA$.
	We define the permutations as follows.
	\begin{align*}
		P(X,Y : H_0, H_1) &= (X, Y \xor H_1(X) : H_0, H_1)\\
		P'(X,Y : H_0, H_1) &= (X, Y \xor H_0(X) : H_0, H_1)
	\end{align*}
	We views these as permutations with four inputs (three of which are returned unchanged).
	Note that $P(X,Y : H_0, H_1) \neq P'(X,Y : H_0, H_1)$ if and only if $H_1(X)\neq H_0(X)$.
	
	Now our distinguisher $\distD$ for $P$ and $P'$ will simply run $\sG^{\pzdist}_{\env,b}(\advA)$ (that is, internally running both $\advA$ and the algorithms of $\env$) except whenever $\advA$ would query $(X,Y)$ to $H_b[\xor]$ it will query $(X,Y : H_0, H_1)$ to its own oracle then return the resulting $(X,Y)$ to $\advA$.
	When $\advA$ produces its final output $y$, $\advD$ halts and outputs that as well.
	Formal pseudocode for $\advD$ is given in \figref{fig:pz-proof}.
	
	\ifnum\eprintversion=0
        \renewcommand{\lwid}{0.4}
        \renewcommand{\mwid}{}
        \renewcommand{\rwid}{0.45}
	\else
        \renewcommand{\lwid}{0.22}
        \renewcommand{\mwid}{}
        \renewcommand{\rwid}{0.31}
	\fi
	\begin{figure}[t]
	\twoColsNoDivide{\lwid}{\rwid}
	{
	\underline{Distinguisher $\advD[O_{\Leftrightarrow}]$}\\
	$(\mathsf{Init},\mathsf{Orac},\mathsf{Repro})\gets\env$\\
	$(s,x,H_0,H_1)\getsr\mathsf{Init}$\\
	$y \getsr \advA_0[S](x)$\\
	For $i=1,\dots,n$ do\\
	\ind $(x,a)\getsr \mathsf{Orac}(s,y)$\\
	\ind $H_1 \gets \mathsf{Repro}(s,a,H_1)$\\
	\ind $y \getsr \advA_i[S](x)$\\
	Return $y$\smallskip
	}
	{
	\underline{Oracle $S(X,Y)$}\\
	Call $O_{\Leftrightarrow}$ with input $(X,Y : H_0, H_1)$\\
	Return $(X,Y)$ to $\advA_i$\smallskip
	}
	\caption{Distinguisher used in proof of \thref{thm:pz-new}}
	\label{fig:pz-proof}
	\hrulefill
	\end{figure}
		
	By \thref{o2h} we get $\abs{\Adv^{\dist}_{P,P',e}(\advD)}\leq 2q\sqrt{\Adv^{\guess}_{P,P'}(\advD)}$.
	Note that $\advD$ perfectly simulated the view of $\advA$.
	This holds because $\advD$'s oracle $O_{\Leftrightarrow}$ on input $(X,Y : H_0, H_1)$ returns $(X, Y \xor H_b(X) : H_0, H_1)$ and hence $\advD$ returns $(X, Y \xor H_b(X))$ to $\advA$.
	Note that $\advA$ expects access to $H_b[\xor]$ which provides the identical behavior of mapping $(X,Y)$ to $(X, Y \xor H_b(X))$.
	Hence $\Adv^{\dist}_{P,P',e}(\advD) = \Adv^{\pzdist}_{\env,e}(\advA)$.
	
	It remains to compute $\Adv^{\guess}_{P,P'}(\advD)$.
	Let $I$ denote a random variable taking the value of $i$ sampled inside of $\sG^{\guess}_{P,P'}(\advD)$.
	Define $Q_0=1$, recursively define $Q_{j+1} = Q_j + q_j + 1$, and define the intervals $R_j=\sett{Q_j,\dots,Q_{j+1}-1}$.
	Note that if $I\in R_j$, then $\advD$ was halted when $\advA_j$ made its $I-Q_j+1$-th query.
	Then,
		\begin{align*}
			\Adv^{\guess}_{P,P'}(\advD)
			=\Pr[\sG^{\guess}_{P,P'}(\advD)]
			=\sum_{j=0}^n \Pr[I \in R_j]\Pr[\sG^{\guess}_{P,P'}(\advD) | I \in R_j]
			=\sum_{j=0}^n (q_j/q)\cdot \Adv^{\pzguess}_{\env,j}(\advA).
		\end{align*}
	This completes the proof. \qed 
\end{proof}

\subsection{Comparing the Pan-Zeng and O2H-based Theorems}\label{sec:pz-comparisons}
By depending on $q_j$ and $\Adv^{\pzguess}_{\env,i}(\advA)$, the bounds in \thref{thm:pz} and \thref{thm:pz-new} can be hard to parse. 
When applying \thref{thm:pz}, Pan and Zeng simplified as follows.
Let $\epsilon = \max_i \Adv^{\pzguess}_{\env,i}(\advA)$ and note $q_i\leq q$.
Then,
\begin{align*}
	\sum_{k=0}^n \sum_{i=0}^k 2 q_i \sqrt{\Adv^{\pzguess}_{\env,i}(\advA)}
	\leq \sum_{k=0}^n \sum_{i=0}^k 2 q \sqrt{\epsilon}
	\leq 2(n+1)^2q\sqrt{\epsilon}.
\end{align*}
We can do a little better by noting $q=\sum_{i=0}^n q_i$ and calculating,
\begin{align*}
	\sum_{k=0}^n \sum_{i=0}^k 2 q_i \sqrt{\Adv^{\pzguess}_{\env,i}(\advA)}
	\leq \sum_{i=0}^n \sum_{k=i}^n 2 q_i \sqrt{\epsilon}
	\leq 2(n+1)q\sqrt{\epsilon}.
\end{align*}
Performing similar simplifications to the bound from \thref{thm:pz-new} we get the improved result that 
\begin{align*}
	2 q \sqrt{\sum_{i=0}^n \frac{q_i}{q} \Adv^{\pzguess}_{\env,i}(\advA)}
	\leq 2 q \sqrt{\sum_{i=0}^n \frac{q_i}{q} \epsilon}
	\leq 2 q \sqrt{\epsilon}.
\end{align*}

This analysis implicitly shows that $\Adv^{\guess}_{P,P'}(\advD) \leq \epsilon$ for the $\advD$ defined in our proof.
We might as well then have stuck with the bound $\abs{\Adv^{\pzdist}_{\env,e}(\advA)} \leq 2q\sqrt{\Adv^{\guess}_{P,P'}(\advD)}$ in the theorem where the latter term could have been expressed in terms of measuring one of $\advA$'s $q$ queries chosen uniformly at random.
Our result is for $e\in\sett{1,1/2}$ where that of Pan and Zeng only applied for $e = 1$.


\section{GHHM Adaptive Reprogramming Framework}\label{sec:ghhm}
In this section we prove that (a version of) the Grilo, H{\"o}velmanns, H{\"u}lsing, Majenz (GHHM) framework for tight adaptive reprogramming~\cite{AC:GHHM21} is implied by the Fixed-Permutation O2H (\thref{o2h}).
We recall their setting in \secref{sec:ghhm-frame} and discuss why their result seems not to imply or be implied by that of Pan and Zeng~\cite{PKC:PanZen24}.
In \secref{sec:ghhm-implied}, we state and prove our variant of the framework.
Our bound is looser than that of GHHM in general, but in \secref{sec:ghhm-comparisons} we apply the same simplifications that GHHM use when applying their result in theorems and show that our theorem provides essentially the same concrete bounds in this case.

\subsection{GHHM Framework and Security Theorem}\label{sec:ghhm-frame}
The GHHM framework for tight adaptive program can syntactically be captured by the $\sG^{\pzdist}$ game from \figref{fig:pz-games}.
Let $(\mathsf{Init},\mathsf{Orac},\mathsf{Repro})=\env$ be an environment as follows.
\begin{itemize}
	\item $\mathsf{Init}$ outputs $(s,x,H_0,H_1)$ where $H_0$ and $H_1$ are two copies of the same truly random functions from $\mathcal{X}$ to $\mathcal{Y}$. We will assume without loss of generality that $\mathcal{X}=\bits^l$ and $\mathcal{Y}=\bits^{m}$. Outputs $s$ and $x$ are empty strings.
	\item $\mathsf{Orac}$ interprets its input $y$ as specifying a probability distribution $p$ over $\mathcal{X}$ and samples $x\getsr p$ and $y\getsr \mathcal{Y}$, then outputs $(x,a)=(x,(x,y))$.\footnote{The most general version of GHHM's framework has $p$ output a second ``side-information'' string $x'$. They already showed (Appendix A in the current ePrint version) that the version we use implies this stronger version with side information.}
	\item $\mathsf{Repro}$ given $a=(x,y)$ and $H_1$ outputs a function defined identically to $H_1$ except it maps $x$ to $y$. 
\end{itemize}
Call such an environment a ``GHHM environment''.
Then GHHM proved the following information theoretic result (their Theorem 1).

\begin{theorem}[GHHM Adaptive Reprogramming, \cite{AC:GHHM21}]\label{thm:ghhm}
	Let $\env$ be a GHHM environment and $\advA$ be an adversary for which $\advA_i$ makes at most $q_i$ oracle queries.
	Then
	\[
		\abs{\Adv^{\pzdist}_{\env,1}(\advA)}
		\leq
		\sum_{i=1}^n \left(\sqrt{\hat{q}_i p_{i,\mathrm{max}}} + 0.5\hat{q}_i p_{i,\mathrm{max}}\right).
	\]
	Here $\hat{q}_i=q_0+\dots+q_{i-1}$ and $p_{i,\mathrm{max}} = \mathop{\mathbb{E}}[\max_{x\in\mathcal{X}} p_i(x)]$. The expectation is over the behavior of the game up until the $i$-th oracle query and $p_i$ is a random variable denoting the probability distribution chosen by $\advA_i$.
\end{theorem}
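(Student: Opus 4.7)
The plan is to mirror the ``backwards bounds'' use of the Fixed-Permutation O2H sketched in the technique overview, with the $n$ reprogramming steps playing the role of the O2H-accessible oracle queries rather than $\advA$'s random oracle queries. First I would replace the oracle $H_1[\oplus]$ with its sparse/Hadamard representation via Zhandry's technique, storing $H_1$ as a register $H$ initialized to zero and implementing each of $\advA$'s queries with the permutation $\procFRo(X,Y:H)$ that performs $H[X]\gets H[X]\oplus Y$. In this picture, reprogramming $H_1(x_i^*)$ to a fresh uniform $y_i^*$ is equivalent to swapping $H[x_i^*]$ with a fresh zero register, a transformation that preserves the joint distribution of the game.

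Next I would define the two permutations to plug into \thref{o2h}. Let $\procFReprogram_1(X^*:H,I,Z)$ swap $H[X^*]$ with $Z[I]$ and increment $I$, and let $\procFReprogram_0(X^*:H,I,Z)$ leave $H$ untouched while still advancing $I$. Since $Z[I]$ is always zero when first touched, the two permutations agree on any input with $H[X^*]=0$. The O2H distinguisher $\advD$ would internally sample the game's registers, simulate $\advA$ together with its random oracle through $\procFRo$, and for each of the $n$ reprogramming steps forward the joint register $(X^*:H,I,Z)$ to its external oracle. By construction, $\advD$ with access to $\procFReprogram_0$ perfectly simulates $\sG^{\pzdist}_{\env,0}$ and with access to $\procFReprogram_1$ perfectly simulates $\sG^{\pzdist}_{\env,1}$.

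To get the summation form of the GHHM bound, I would apply \thref{o2h} in a hybrid over the $n$ reprogramming steps: for each $j\in\{1,\dots,n\}$ construct a distinguisher that handles the first $j-1$ steps with $\procFReprogram_1$ and the last $n-j$ with $\procFReprogram_0$, while exporting only step $j$ to its external oracle. Each such distinguisher makes a single query, so the O2H yields
\[
\bigl|\Pr[\sH_{j-1}]-\Pr[\sH_j]\bigr|\le 2\sqrt{\Pr[\mathrm{Measure}(H[x_j^*])\neq 0]}.
\]
At the moment step $j$ is invoked, $\advA$ has made at most $\hat q_j$ random oracle queries, and in the sparse representation $\procFRo$ writes nonzero values into $H$ at most at the queried locations. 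A union bound over these locations, combined with $x_j^*\sim p_j$ being sampled conditionally on the prior state, then gives $\Pr[H[x_j^*]\neq 0]\le \hat q_j\,p_{j,\max}$. Telescoping and applying the square-root version ($e=1/2$) yields a bound of the form $\sum_{j=1}^n 2\sqrt{\hat q_j\,p_{j,\max}}$.

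The main obstacle is the rigorous sparsification step: justifying that after $\hat q_j$ invocations of $\procFRo$ the register $H$ carries nonzero amplitude at at most $\hat q_j$ computational-basis locations, and that this viewpoint is indistinguishable from the traditional random oracle. As promised in \secref{sec:prelims}, I would carry out this justification in an appendix using Zhandry's framework. A secondary, more cosmetic, gap is the lower-order summand $0.5\,\hat q_i p_{i,\max}$ in \thref{thm:ghhm}, which arises from a finer second-order analysis in GHHM; this does not fall out of the O2H application directly, and I would defer showing that the two bounds agree in use to \secref{sec:ghhm-comparisons}, where GHHM's own simplifications are applied.
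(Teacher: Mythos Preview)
Your proposal is correct and uses the same core ingredients as the paper---Zhandry's sparse representation of the random oracle, followed by Fixed-Permutation O2H applied to the \emph{reprogramming} queries rather than the random-oracle queries---but it diverges in one structurally significant way. The paper does not actually prove \thref{thm:ghhm} (that result is cited from GHHM); it instead proves the variant \thref{thm:ghhm-new} by applying \thref{o2h} \emph{once} with a distinguisher exporting all $n$ reprogramming steps, yielding $2n\sqrt{\sum_i \hat q_i p_{i,\max}/n}$. You instead run a hybrid over the $n$ steps and apply \thref{o2h} once per step with a single exported query, obtaining $\sum_j 2\sqrt{\hat q_j p_{j,\max}}$. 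By Cauchy--Schwarz your bound is never worse than the paper's (and this is exactly the gap the paper acknowledges in \secref{sec:ghhm-comparisons}, citing Liao--Ge--Xue). So your route is genuinely different and buys a sharper summation-form bound, at the cost of losing the single clean O2H application; the paper's route is simpler to state and immediately gives the $e=1/2$ version without a telescoping sum of square-root differences.

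Two small cautions. First, in your hybrid the guessing game for step $j$ runs the first $j-1$ steps with $\procFReprogram_1$, so the distribution of $p_j$ is that of the $b=1$ game truncated at step $j$; make sure your $p_{j,\max}$ is interpreted consistently (the paper's single-shot proof runs the guessing game entirely in the $b=0$ world, so this ambiguity does not arise there). Second, your remark about the $0.5\,\hat q_i p_{i,\max}$ term is accurate: neither your argument nor the paper's recovers it, and the paper explicitly defers the comparison to \secref{sec:ghhm-comparisons} just as you propose.
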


Even though we expressed this result using the same game from the Pan and Zeng framework, it's not clear that \thref{thm:pz} or \thref{thm:ghhm} implies the other. 
The GHHM result seems weaker because it requires reprogrammed points to be information-theoretically (rather than computationally) hidden and requires the initial functions and reprogrammed output to be uniform.
On the other hand, the Pan and Zeng result gives a bound in terms of the probability that $\advA$ finds a point where $H_1$ differs from $H_0$.
To get a meaningful bound we need this probability to be small, so the reprogrammed point must be hard to predict even after the reprogramming occurred.
GHHM (by $\mathsf{Orac}$ giving $x$ to $\advA$) is explicitly not such a setting.
It instead works for cases where the reprogrammed point is hard to predict ahead of time.
Pan and Zeng stated that their result ``generalizes and improves'' the GHHM result.
We notified them that this does not seem to be the case, and they acknowledged that their current result does not capture that of GHHM.

Notably Pan and Zeng are able to apply their result to analyze the selective-opening security of an encryption scheme, a result that a priori would seem like guessing the reprogrammed point should be easy afterwards.
They achieve this by choosing an order of programming that differs from what one's first instinct would use.
In essence, the ``trick'' they use is that in the settings they consider, they can predict ahead of time a small set $S$ such that only points in $S$ will ever be programmed. 
Then they arrange ahead of time for the initial functions $H_0$ and $H_1$ to be different on all of these points and for later reprogramming to switch the functions to being consistent on the reprogrammed point.
It does not seem possible in general to capture the GHHM setting in this manner.

\subsection{The GHHM Theorem is Implied by O2H}\label{sec:ghhm-implied}
Now we state and prove a variant of \thref{thm:pz} which follows from the O2H result \thref{o2h}.
\begin{theorem}\label{thm:ghhm-new}
	Let $\env$ be a GHHM environment and $\advA$ be an adversary for which $\advA_i$ makes at most $q_i$ oracle queries.
	Let $q=q_0+\dots+q_n$.
	Define $\hat{q}_i$ and $p_{i,\mathrm{max}}$ as in \thref{thm:ghhm}.
	Let $e\in\sett{1,1/2}$.
	Then
	\[
		\abs{\Adv^{\pzdist}_{\env,e}(\advA)}
		\leq
		2n\sqrt{\sum_{i=1}^n\hat{q_i} p_{i,\mathrm{max}} / n}.
	\]
\end{theorem}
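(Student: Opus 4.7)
The plan is to invoke the Fixed-Permutation O2H (\thref{o2h}) with permutations that act on the \emph{reprogramming} register rather than on the random oracle register. A direct application of O2H to $H_0$ versus $H_1$ is useless in the GHHM setting, because once $x_i^*$ is handed to $\advA$ the two oracles are easy to separate; we need a ``backwards bounds'' application in which the O2H guess game interrogates the moment reprogramming happens. This is enabled by Zhandry's sparse (Fourier/Hadamard) representation, exactly as sketched in the introduction with the permutations $\procFRo$ and $\procFReprogram_b$.

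First I would rewrite $\sG^{\pzdist}_{\env,b}$ using Zhandry's sparse representation of a uniformly random function: $H$ is held in a register initialized to all zeros, an adversarial RO query $(X,Y)$ applies $\procFRo$ which does $H[X] \gets H[X] \xor Y$, and reprogramming $H_1(x^*)$ to a fresh uniform output is realized by swapping $H[x^*]$ with the $i$-th cell $Z[I]$ of a fresh all-zero register $Z$. The two worlds $b=0$ and $b=1$ then differ only in whether, at each of the $n$ reprogramming events, this swap is performed or whether $H$ is left alone (only $I$ is incremented). Call these two permutations $P_1$ and $P_0$; since $Z[I]$ is untouched before the swap step, they differ on a given input iff $H[X^*]\neq 0$.

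Next I would build an O2H distinguisher $\advD$ that internally runs $\advA$ together with the environment $\env$, holding $H,I,Z$ as its own registers and simulating each of $\advA$'s RO queries locally with $\procFRo$. The only calls $\advD$ forwards to its O2H oracle are the $n$ reprogramming events, so $\advD$ makes $n$ queries. Because the rewriting via the sparse representation is an exact simulation of the original game, we obtain
\[
\abs{\Adv^{\pzdist}_{\env,e}(\advA)} = \abs{\Adv^{\dist}_{P_0,P_1,e}(\advD)} \leq 2n \sqrt{\Adv^{\guess}_{P_0,P_1}(\advD)}.
\]

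The main obstacle is then to prove $\Adv^{\guess}_{P_0,P_1}(\advD) \leq \tfrac{1}{n}\sum_{i=1}^n \hat{q}_i\, p_{i,\mathrm{max}}$, from which the stated bound follows by plugging in. Conditioned on the uniform choice $I=i$, the guess game measures the $i$-th reprogramming query's input; the $X^*$ register carries the classical value $x_i^*$ sampled from $p_i$ (determined by the classical transcript) and success amounts to $H[x_i^*]\neq 0$. The key structural fact supplied by Zhandry's representation is that after the at-most-$\hat{q}_i$ invocations of $\procFRo$ preceding stage $i$, any computational-basis measurement of $H$ yields a function whose support has size at most $\hat{q}_i$. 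Hence, conditioning on the classical transcript and summing over $x$,
\[
\Pr\!\left[H[x_i^*]\neq 0 \,\big|\, I=i\right] \;\leq\; \sum_{x}\, p_i(x)\Pr[H[x]\neq 0] \;\leq\; p_{i,\mathrm{max}}\cdot\mathop{\mathbb{E}}\!\left[\,\big|\{x:H[x]\neq 0\}\big|\,\right] \;\leq\; \hat{q}_i\,p_{i,\mathrm{max}},
\]
where I have bounded the total measurement mass of nonzero positions by the sparsity $\hat{q}_i$ and each individual coordinate's probability weight under $p_i$ by $p_{i,\mathrm{max}}$. Averaging over $i \getsr \sett{1,\dots,n}$ produces the factor $1/n$ and completes the argument. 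The care needed is precisely in verifying the sparsity invariant for $\procFRo$ within the Fourier/Hadamard picture, which is the content of Zhandry's technique and is the sole quantum ingredient in an otherwise classical-looking calculation.
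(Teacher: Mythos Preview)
Your proposal is correct and follows essentially the same approach as the paper's proof: rewrite the game via Zhandry's sparse representation so that $H$ starts at zero and RO queries become $\procFRo$, then apply the Fixed-Permutation O2H to the $n$ \emph{reprogramming} queries (not the RO queries), with the two permutations differing exactly when $H[X^*]\neq Z[I]=0$, and finally bound the guess advantage using the at-most-$\hat q_i$ sparsity of $H$ together with the min-entropy bound $p_{i,\mathrm{max}}$. The paper additionally purifies the sampling randomness $R$ into a register and stresses that $H$ is independent of $R[i-1]$ so that $x_i^*$ behaves as a fresh sample from $p_i$ when bounding $\Pr[H[x_i^*]\neq 0]$; your ``conditioning on the classical transcript'' is the same idea, though you might make that independence explicit.
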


There are two challenges that make it surprising this theorem can be proven from the Fixed-Permutation O2H theorem. 
The first is that all of the dependence on $q$ is inside the square root, whereas O2H gives us a bound with $q$ outside of the square root. 
Secondly, (and similarly to our comparison with Pan and Zeng's theorem) after a reprogramming the adversary is told the point $x$ at which the oracle was redefined.
Consequently, we can only hope to rely on the hardness of querying the oracle on input $x$ before it was reprogrammed. 
But because the distribution for choosing $x$ is not fixed ahead of time, if we tried to naively apply O2H it's not clear how to make the permutations differ on this unknown $x$ ahead of time. 

The same insights tackle both of these challenges. 
Rather than thinking of the ``bad event'' happening during queries to $H$, we are going to think of the reprogramming process as being performed inside an oracle and the ``bad event'' is querying that oracle on inputs that make it reprogram $H$ on places where it has ``already been queried''. 
Note that ``already been queried'' is not a well defined notion because the adversary could have queried all of $H$ in superposition.
To formalize this idea, we use the techniques of Zhandry~\cite{C:Zhandry19} to represent the quantum accessible random oracle $H$ as a ``sparse'' table which is only non-zero on a few entries that ``have been queried'' by the attacker.
(More precisely, it will be a superposition over such tables.)
Then reprogramming at point $x$ can be viewed as a fixed permutation which swaps $H(x)$ with an auxiliary all-zero register.
We can apply the Fixed-Permutation O2H to compare that reprogramming permutation and a permutation which leaves $H$ untouched. 
Because the other register is all-zero, the permutations will only differ on inputs where the random $x$ happens to equal one of the few non-zero entries of $H$.

In \secref{sec:abkh}, we recall more recent variants of this result for reprogramming of random permutations~\cite{EC:ABKM22,EC:ABKMS24}.
Because no permutation analog for Zhandry's result is known, we are unable to show they follow from Fixed-Permutation O2H.

\begin{proof}
Our first step will be moving to a setting where queries to the hash and requests for reprogramming are both quantum queries to oracles.
Consider the game $\sG^b_0$ for $b\in\bits$ defined in \figref{fig:ghhm-proof-games} which we will use to emulate the execution of $\sG^{\pzdist}_{\env,b}(\advA)$ for the given $\advA$ and $\env$.
The behavior of $\env$ is embedded into the oracles.
The game stores ahead of time the list of random outputs $Z$ that will be programmed into new locations of the hash function and random strings $R$ that will seed sampling the programmed point from the distributions chosen by the attacker.
It uses the following quantum registers.
\begin{itemize}
	\item $W$: The local work space of $\advA$. Its final output guess is obtained by measuring $W[1]$.
	\item $X,Y$: The registers intended for $\advA$ to provide input and receive output, respectively, from its oracles.
	\item $H$: The register storing the table specifying the random oracle.
	\item $R$: The registers storing randomness used to seed the choice of $x$'s according to distributions $p$.
	\item $Z$: The registers storing random strings to be programmed into $H$.
	\item $I$: The register storing a counter tracking how many reprogramming queries have occurred. It determines which entries of $R$ and $Z$ are used. 
\end{itemize}
We think of $\advB$ having direct access to the registers $W,X,Y$.
All other registers are controlled exclusively by the game and may not directly be modified by $\advB$.
Note that both oracles of the game are defined by classical permutations.
It is not strictly necessary to purify all of the variables in this way. 
We do so out of an aesthetic preference to have all variables be quantum, rather than having a mix of quantum and classical values, so that we can represent each oracle as single permutation queried in superposition. 
We refer the interested reader to Liao, Ge, and Xue~\cite[Thm.~7]{EPRINT:LiaJiaXue25} for a version of the proof that leaves more of the variables classical.

Given $\advA=(\advA_0,\dots,\advA_n)$ we can map it to an algorithm $\advB$ playing $\sG^b_0$ as follows.
It initially runs on $\advA_0$ on input $\varepsilon$.
Whenever it queries $X,Y$ to its hash oracle, $\advB$ forwards this to its $\procRo$ oracle.
When $\advA_i$ would end a stage, outputting probability distributions $p$, $\advB$ puts the representation of this into $X$ and prepares $Y$ on in the all-zero state.
It makes a $\procReprogram_b$ query and returns the result to the next phase of $\advA$.
When $\advA_n$ halts with output $y$, $\advB$ stores that in $W[1]$ and halts.

\ifnum\eprintversion=0
        \renewcommand{\lwid}{0.4}
        \renewcommand{\mwid}{}
		\renewcommand{\rwid}{0.45}
\else
        \renewcommand{\lwid}{0.21}
        \renewcommand{\mwid}{}
        \renewcommand{\rwid}{0.27}
\fi
\begin{figure}[t]
\twoColsNoDivide{\lwid}{\rwid}
{
\underline{Games $\sG^b_0(\advB)$}\\
$\ro\getsr\Func(l,m)$\\
$\textbf{Z}\getsr\Func(\lceil\lg q\rceil, m)$\\
$\textbf{R}\getsr\Func(\lceil\lg q\rceil, \infty)$\\
$\ket{H,Z,R}\gets\ket{\ro,\mathbf{Z},\mathbf{R}}$\\
Run $\advB[{\procRo,\procReprogram_b}]$\\
Measure $W[1]$\\
Return $W[1]=1$\\[4pt]
\underline{$\procRo(X,Y : H)$}\\
$Y \gets Y \xor H[X]$\\
Return $(X,Y : H)$\\[4pt]
\underline{$\procFRo(X,Y : H)$}\\
$H[X] \gets H[X] \xor Y$\\
Return $(X,Y : H)$
\smallskip
}
{
\underline{Games $\sG^b_1(\advB)$}\\
$\textbf{R}\getsr\Func(\lceil\lg q\rceil, \infty)$\\
$\ket{R}\gets\ket{\mathbf{R}}$\\
Run $\advB[\had^Y\circ\procFRo\circ\had^Y,\procReprogram_b]$\\
Measure $W[1]$\\
Return $W[1]=1$\\[8pt]
\underline{$\procReprogram_b(X,Y:H,I,Z,R)$ }\\
Interpret $X$ as prob.~dist.~$p$\\
$x\gets p(R[I])$\\
If $b=1$ then\\
\ind $(H[x],Z[I])\gets(Z[I],H[x])$\\
$Y \gets Y \xor x$\\
$I\gets I+1\bmod 2^{\lceil\lg q\rceil}$\\ 
Return $(X,Y:H,I,Z,R)$\smallskip
}
\caption{Games used in the analysis of \thref{thm:ghhm-new}. Registers not explicitly initialized are initialized to all $0$.}
\label{fig:ghhm-proof-games}
\hrulefill
\end{figure}
	
When $b=1$, each reprogram oracle query programs a fresh random output (from $Z$) into the location of $H$ chosen according to $p$.
When $b=0$, nothing is ever programmed into $H$.
These match the behaviors of $H_1$ and $H_0$ in $\sG^{\pzdist}$ so $\advB$ perfectly simulates the view of $\advA$, giving $\Adv^{\pzdist}_{\env,e}(\advA) = (\Pr[\sG^1_0(\advB)])^e - (\Pr[\sG^0_0(\advB)])^e$.

Next we will transition to the games $\sG^b_1$.
In these games, rather than being sampled at random, the registers $H$ and $Z$ are initialized to zeros.
The oracle $\procRo$ which writes $H[X]$ into $Y$ has instead been replaced with $\had^Y\circ\procFRo\circ\had^Y$ here $\had^Y$ is the Hadamard transform applied to register $Y$ and $\procFRo$ is the Fourier version of $\procRo$ where instead $Y$ is written into $H[X]$.

By the sparse QROM representation technique of Zhandry, these games are completely identical so $\Adv^{\pzdist}_{\env,e}(\advA) = (\Pr[\sG^1_1(\advB)])^e - (\Pr[\sG^0_1(\advB)])^e$ and the view of $\advA$ inside of $\advB$ is unchanged.
In \apref{app:ghhm-details} we break this claim into smaller atomic steps for those unfamiliar with the technique.

Now we can compare the behavior of the permutations $\procReprogram_1$ and $\procReprogram_0$.
They only differ in whether $H[x]$ and $Z[I]$ are swapped ($b=1$) or not ($b=0$).
Thus $\procReprogram_1$ and $\procReprogram_0$ will only differ on inputs for which $H[x]\neq Z[I]$.

We bound the difference between the $b=1$ and $b=0$ worlds of the game using the Fixed-Permutation O2H (\thref{o2h}). 
Let $P=\procReprogram_0$, $P'=\procReprogram_1$, and $\advD$ be the distinguisher that runs all of $\sG^b_1$ internally, except it calls its oracle on $(X,Y:H,I,Z,R)$ whenever $\advB$ calls its reprogramming oracle on $X,Y$.
Note that it makes $n$ oracle queries.
We have that $\Adv^{\pzdist}_{\env,e}(\advA) = \Adv^{\dist}_{P,P',e}(\advD)$. 

We complete the proof by analyzing $\Adv^{\guess}_{P,P'}(\advD)$, though leaving the bound in terms of $\Adv^{\guess}_{P,P'}(\advD)$ would make it applicable to settings where the reprogrammed points are only computationally hard to predict.
Consider a fixed choice of $i\in\sett{1,\dots,n}$ in $\sG^{\guess}_{P,P'}(\advD)$.
The permutations differ on the measured input iff $H[x]\neq Z[i-1]$ (because $I$ will have value $i$).
By construction of the game $Z[i-1]$ is necessarily all zeros so this is equivalent to asking whether $H[x]$ is zero.
Note that $H$ is independent of $R[i-1]$ so we can think of $H$ being measured first, then $R[i-1]$ being measured.
So $x=p(R[i-1])$ looks like a fresh sample from $p$.
At this point, $\advB$ inside of $\advD$ has made at most $\hat{q_i}$ queries to $\procFRo$ so $H$ is a superposition of tables each of which has at most $\hat{q_i}$ non-zero entries.
The result of measuring $H$ will consequently be such a table.
The probability of a non-zero entry being hit by a fresh sample from $p$ is at most $\hat{q_i}\cdot p_{i,\mathrm{max}}$ from a union bound over the non-zero entries of the measured value of $H$.
Averaging over the possible choice of $i$ we have
\[
	\Adv^{\guess}_{P,P'}(\advD) = \sum_{i=1}^n \Pr[i]\Pr[\sG^{\guess}_{P,P'}(\advD) | i] \leq \sum_{i=1}^n (1/n)\hat{q_i}\cdot p_{i,\mathrm{max}}.
\]
Applying \thref{o2h} gives the claimed bound\qed
\end{proof}

\subsection{Comparing the GHHM and O2H-based Theorems}\label{sec:ghhm-comparisons}

\heading{Simplified Comparison}
By depending on $\hat{q}_j$ and $p_{i,\mathrm{max}}$, the bounds in \thref{thm:ghhm} and \thref{thm:ghhm-new} can be hard to parse. 
Consequently, when applying \thref{thm:ghhm} in proofs, GHMM simplified as follows (e.g., from their proposition 2).
Let $p_{\mathrm{max}} = \max_i p_{i,\mathrm{max}}$. 
Note that $\hat{q}_i\leq q$, and $qp_{\mathrm{max}} \leq \sqrt{qp_{\mathrm{max}}}$.
Then,
\[
	\sum_{i=1}^n \left(\sqrt{\hat{q}_i p_{i,\mathrm{max}}} + 0.5\hat{q}_i p_{i,\mathrm{max}}\right)
	\leq
	\sum_{i=1}^n 1.5 \sqrt{q p_{\mathrm{max}}}
	=
	1.5n\sqrt{q p_{\mathrm{max}}}.
\]
Performing analogous simplifications from the bound in \thref{thm:ghhm-new} gives
\[
	2n\sqrt{\sum_{i=1}^n\hat{q_i} p_{i,\mathrm{max}} / n}
	\leq
	2n\sqrt{\sum_{i=1}^n qp_{\mathrm{max}} / n}
	=
	2n\sqrt{qp_{\mathrm{max}}}.
\]
Our result is for $e\in\sett{1,1/2}$ where that of GHHM only applied for $e = 1$.

\heading{General Comparison}
Liao, Ge, and Xue~\cite{EPRINT:LiaJiaXue25} point out that the Cauchy-Schwartz inequality shows the bound in \thref{thm:ghhm-new} is in general worse than that of \thref{thm:ghhm}.
Cauchy-Schwartz says that $(\sum_{i=1}^n x_iy_i)^2 \leq (\sum_{i=1}^n x_i^2)(\sum_{i=1}^n y_i^2)$ for real $x_i,y_i$, which simplifies to $(\sum_{i=1}^n x_i)^2 \leq n(\sum_{i=1}^n x_i^2)$ when $y_i=1$.
Now define $x_i=\sqrt{\hat{q}_i p_{i,\mathrm{max}}}$ and assume $\hat{q}_i p_{i,\mathrm{max}}\leq 1$.
Then,
\[
	\sum_{i=1}^n \left(\sqrt{\hat{q}_i p_{i,\mathrm{max}}} + 0.5\hat{q}_i p_{i,\mathrm{max}}\right)
	\leq 1.5 \sum_{i=1}^n x_i
	\leq 1.5\sqrt{n\left(\sum_{i=1}^n x_i^2\right)}
	= 1.5 n\sqrt{\sum_{i=1}^n \hat{q}_i p_{i,\mathrm{max}} / n}.
\]
By replacing our use of the Fixed-Permutation O2H with their new Double-Oracle Fixed-Permutation O2H, Liao, Ge, and Xue~\cite[Thm.~7]{EPRINT:LiaJiaXue25} prove an improved bound for $e\in\sett{1,1/2}$ of 
\[
	\abs{\Adv^{\pzdist}_{\env,e}(\advA)}
	\leq
	\sqrt{8 \sum_{i=1}^n \hat{q}_i p_{i,\mathrm{max}}}.
\]


\section{Unruh's Adaptive O2H}\label{sec:unruh}
In this section, we show that Unruh's Adaptive O2H's results~\cite{C:Unruh14,EC:Unruh15} (which generalize an earlier result Unruh~\cite{unruh}) are implied by the Fixed-Permutation O2H (\thref{o2h}). 
To proving the stronger version~\cite{EC:Unruh15} we emulate the ideas from our proof in \secref{sec:ghhm}.

\subsection{First Adaptive O2H}\label{sec:unruh-one}
The first adaptive O2H result we analyze bounds how well an adversary can distinguish between $H(x,m)$ and a random string where the attacker chooses $m$ and $x$ is uniformly random. 
This is defined by the games shown in \figref{fig:unruh} for which we define 
\[
	\Adv^{\undist}_{e}(\advA)=\left(\Pr[\sG^{\undist}_1(\advA)]\right)^e-\left(\Pr[\sG^{\undist}_0(\advA)]\right)^e.
\]

The bound will consist of two terms.
Intuitively, the first information theoretically bounds the how much $\advA_0$ can contribute the advantage because $x$ is independent of its view.
The second terms bounds in terms of how likely $\advA_1$ is to query $(x,m)$ to its oracle, captured formally by $\Adv^{\unguess}_{}(\advA)=\Pr[\sG^{\unguess}_{}(\advA)]$.

\ifnum\eprintversion=0
        \renewcommand{\lwid}{0.36}
        \renewcommand{\mwid}{}
        \renewcommand{\rwid}{0.49}
\else
        \renewcommand{\lwid}{0.23}
        \renewcommand{\mwid}{}
        \renewcommand{\rwid}{0.36}
\fi
\begin{figure}[t]
\twoColsNoDivide{\lwid}{\rwid}
{
\underline{Game $\sG^{\undist}_{b}(\advA)$}\\
$H\getsr\Func(l+k,n)$\\
$m \getsr \advA_0[H[\oplus]]$\\
$x \getsr \bits^{l}$\\ 
$B_0 \gets H(x,m)$\\
$B_1 \getsr \bits^n$\\
$b'\getsr\advA_1[H[\oplus]](x,B_b)$\\
Return $b'=1$\smallskip
}
{
\underline{Game $\sG^{\unguess}(\advA)$}\\
$i\getsr\sett{1,\dots,q_1}$\\
$H\getsr\Func(l+k,n)$\\
$m \getsr \advA_0[H[\oplus]]$\\
$x \getsr \bits^{l}$\\ 
$B_1 \getsr \bits^n$\\
Run $\advA_1[H[\oplus]](x,B_1)$ until its $i$-th query\\
Measure the input $(x',m')$ to this query\\
Return $(x,m)=(x',m')$\smallskip
}
\caption{Games used for Unruh's adaptive O2H~\cite{C:Unruh14}. Different stages of $\advA$ implicitly share state.}
\label{fig:unruh}
\hrulefill
\end{figure}

Of these advantages, Unruh proves the following relationship (Lemma 14 in the current ePrint version). 
\begin{theorem}[Unruh Adaptive O2H, \cite{C:Unruh14}]\label{thm:unruk}\label{thm:unruh}
	Let $\advA=(\advA_0,\advA_1)$ be an adversary for which $\advA_i$ makes at most $q_i$ oracle queries.
	Then
	\[
		\abs{\Adv^{\undist}_{1}(\advA)} 
		\leq
		q_0 2^{-l/2+2}
		+
		2q_1\sqrt{\Adv^{\unguess}_{}(\advA)}.	
	\]
\end{theorem}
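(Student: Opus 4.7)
The plan is to apply the Fixed-Permutation O2H (\thref{o2h}) twice, bridged by a hybrid game $\sG^*$ constructed by ``programming $H$ on many points.'' Before $\advA_0$ begins, I sample $H$, a uniform $x \in \bits^l$, and independent fresh strings $z_{m'} \in \bits^n$ for every $m' \in \bits^k$, and let $H^*$ be $H$ reprogrammed so that $H^*(x,m') = z_{m'}$ on the entire row $(x,\cdot)$. Two observations make $\sG^*$ a convenient bridge: (i) if both $\advA_0$ and $\advA_1$ are given access to $H^*$ with $\advA_1$'s input $(x, z_m) = (x, H^*(x,m))$, the resulting game is identically distributed to $\sG^{\undist}_1$, since $H^*$ is marginally a uniform random function and $(x, z_m)$ plays the role of $(x, H(x,m))$; (ii) if both adversaries instead see the original $H$ with input $(x, z_m)$, the resulting game is identically distributed to $\sG^{\undist}_0$, since $z_m$ is fresh uniform and plays the role of $B_1$.

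The first O2H application switches $\advA_0$'s oracle between $H^*$ and $H$ while holding the $\advA_1$-side fixed. Because the distinguisher holds $H$, $x$, and $\{z_{m'}\}$ explicitly, it simulates $\advA_1$'s stage internally and routes only $\advA_0$'s queries through its O2H oracle. The two fixed permutations differ exactly on inputs whose first coordinate is $x$; since $x$ is uniform over $\bits^l$ and independent of $\advA_0$'s view, the $\sG^{\guess}$ measurement probability is at most $2^{-l}$, contributing the $q_0 \cdot 2^{-l/2+2}$ term (with constants accounting for the two transitions in the hybrid chain used to reach each of (i) and (ii)).

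The second O2H application handles the remaining change on $\advA_1$'s side. By pre-aligning the row $(x,\cdot)$ back to $H(x,\cdot)$ at every position except $(x, m)$ --- which the distinguisher can do because it knows $H$, $x$, $m$, and all the $z_{m'}$ --- the change from form (i) to form (ii) reduces to a single point $(x, m)$. Routing $\advA_1$'s queries through an O2H oracle whose two fixed permutations differ only at $(x, m)$, the measurement probability matches exactly the definition of $\Adv^{\unguess}(\advA)$, producing the $2q_1 \sqrt{\Adv^{\unguess}(\advA)}$ term.

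The main obstacle is keeping the two permutations genuinely fixed in the sense of \thref{o2h} despite the random samples $x$, $\{z_{m'}\}$ and the adaptively chosen $m$. This is handled using the state-register framing of \secref{sec:panzen} and \secref{sec:ghhm}: the distinguisher stores these values in auxiliary registers passed (after the colon) alongside each query to the O2H oracle, so that the permutations act on the combined input-plus-state registers and are specified independently of any random choice made during the simulation. The same argument immediately yields the $e = 1/2$ variant of the bound as well.
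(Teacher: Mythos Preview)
Your approach is essentially the paper's: two Fixed-Permutation O2H applications on $\advA_0$'s oracle (reprogramming the entire row $(x,\cdot)$) plus one on $\advA_1$'s (the single point $(x,m)$), bridged by a distributional equivalence. The paper's hybrids $\sH_{(0,0,0)}\to\sH_{(1,0,0)}\equiv\sH_{(1,1,1)}\to\sH_{(0,1,1)}\to\sH_{(0,1,0)}$ are exactly your chain in different notation, with your $\{z_{m'}\}$ playing the role of the paper's independent function $h$.

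Two points to tighten. First, your identifications are swapped: scenario (i), where the input is $z_m=H^*(x,m)$, is the \emph{real} world $\sG^{\undist}_0$, and scenario (ii), where $z_m$ is independent of the oracle $H$, is $\sG^{\undist}_1$; this is harmless under the absolute value but worth fixing. Second, the ``pre-aligning'' step is not free merely because the distinguisher \emph{knows} $H,x,m,\{z_{m'}\}$: replacing $\advA_1$'s oracle $H^*$ by $H$-except-at-$(x,m)$ is a change of game, and you must argue it as a renaming of random variables (swap the families $\{H(x,m')\}\leftrightarrow\{z_{m'}\}$), which is only valid once $\advA_0$'s oracle has been switched so that one family is no longer seen by $\advA_0$. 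This is precisely why two $\advA_0$-transitions are needed, and it is what the paper makes explicit via the equivalence $\sH_{(1,0,0)}\equiv\sH_{(1,1,1)}$ (swapping $H(x,m)\leftrightarrow B_1$ after $\advA_0$'s row $x$ has been redirected to $h$).
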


We prove this result for $\Adv^{\undist}_{e}(\advA)$ with $e\in\sett{1,1/2}$ from the Fixed-Permutation O2H (\thref{o2h}).
Starting from the real world, the proof will first use O2H to program $\advA_0$'s oracle to be different on \emph{all} inputs starting with $x$.
Then $H(x,m)$ will only ever be used to as input to $\advA_1$ and in response to its oracle queries.
We switch both uses to use $B_1$ instead, which is equivalent. 
Then we apply O2H again to switch $\advA_1$'s oracle back to using $H(x,m)$.

In \secref{sec:unruh2} our proof of Unruh's second result will actually imply this theorem with a better concrete bound, replacing $q_0 2^{-l/2+2}$ with $\sqrt{q_0}2^{-l/2+1}$.
We find it pedagogically useful to start with this proof first.
Our proof of Unruh's second result will apply the ideas we used for proving GHHM's result in \secref{sec:ghhm} --- in particular, using a sparse representation of the random oracle so that we can check for whether $\advA_0$ ``queried'' $x$ only after it has stopped executing.

AHU~\cite{C:AmbHamUnr19} used their O2H theorem to prove post-quantum security of a Fujisaki-Okamoto variant --- a result previously proven by using Unruh's adaptive O2H result~\cite{TCC:TarUnr16}.\footnote{In fact, both the proofs had flaws (see the ePrint version of~\cite{C:AmbHamUnr19}). This is orthogonal to our discussion here.}
While discussing the differences AHU say,
\begin{quote}
While our
O2H Theorem is not adaptive (in the sense that the input where the oracle is
reprogrammed has to be fixed at the beginning of the game), it turns out that
in the present case our new O2H Theorem can replace the adaptive one. This is
because our new O2H Theorem allows us to reprogram the oracle at a large number
of inputs (not just a single one). It turns out we do not need to adaptively choose
the one input to reprogram, we just reprogram all potential inputs. At least in the
proof from~\cite{TCC:TarUnr16}, this works without problems. 
\end{quote}
Our proof uses this idea of reprogramming the oracle at a large number of points, showing that the O2H theorem can replace \thref{thm:unruk} in \emph{any} proof, not just the one from~\cite{C:AmbHamUnr19}.
Our stronger proof in \secref{sec:unruh2} will only require reprogramming at a single point. 

	\begin{figure}[t]
	\threeColsNoDivide{0.21}{0.27}{0.27}
	{
	\underline{Hybrids $\sH_{(a,b,c)}$}\\
	$H\getsr\Func(l+k,n)$\\
	$h\getsr\Func(k,n)$\\
	$x \getsr \bits^{l}$\\ 
	$m \getsr \advA_0[\procRo^0_{a}]$\\
	$B \gets H(x,m)$\comment{$b=0$}\\
	$B_1 \getsr \bits^n$\\
	$B\gets B_1$\comment{$b=1$}\\
	$b'\getsr\advA_1[\procRo^1_{c}](x,B)$\\
	Return $b'=1$\smallskip
	}
	{
	\underline{$\procRo^{0}_{a}(X,M,Y:H,h,x)$}\\
	If $X=x$ then\\
	\ind $Y \gets Y \xor H[X,M]$\comment{$a=0$}\\
	\ind $Y \gets Y \xor h[M]$\comment{$a=1$}\\
	Else $Y \gets Y \xor H[X,M]$\\
	Return $(X,M,Y:H,h,x)$\smallskip
	}
	{
	\underline{$\procRo^1_{c}(X,M,Y:H,B_1,x,m)$}\\
	If $(X,M)=(x,m)$ then\\
	\ind $Y \gets Y \xor H[X,M]$\comment{$c=0$}\\
	\ind $Y \gets Y \xor B_1$\comment{$c=1$}\\
	Else $Y \gets Y \xor H[X,M]$\\
	Return $(X,M,Y:H,B_1,x,m)$
	\smallskip
	}
	\caption{Hybrid games used for proof of \thref{thm:unruk}}
	\label{fig:unruh-pf}
	\hrulefill
	\end{figure}

\begin{proof}
	Our proof will use the hybrid games $\sH_{(a,b,c)}$ shown in \figref{fig:unruh-pf} which are parameterized by $(a,b,c)\in\bits^{3}$.
	Informally, we will show the following.
	\[
		\sG^{\dist}_{0} 
		\equiv
		\sH_{(0,0,0)}
		\underset{{2q_0\sqrt{2^{-l}}}}{\approx}
		\sH_{(1,0,0)}
		\equiv
		\sH_{(1,1,1)}
		\underset{{2q_0\sqrt{2^{-l}}}}{\approx}
		\sH_{(0,1,1)}
		\underset{{2q_1\sqrt{\varepsilon^{\mathsf{ow}}}}}{\approx}
		\sH_{(0,1,0)}
		\equiv
		\sG^{\dist}_{1}. 
	\]  
	We proceed from left to right.
	Compared to $\sG^{\undist}_0$, hybrid $\sH_{(0,0,0)}$ samples an additional (unused) random function $h$ and samples $x$ at the beginning of the game.
	All three parameters being zero means both random oracles respond correctly using $H$ and that $\advA_1$ is given $H(x,m)$ as input. 
	So game $\sG^{\undist}_0$ is equivalent to game $\sH_{(0,0,0)}$.
	
	Now compare $\sH_{(0,0,0)}$ to $\sH_{(1,0,0)}$.
	The change to $a$ means that $\advA_0$'s oracle will return $h(M)$ on any input of the form $(x,M)$.
	Consider a distinguisher $\advD$ for the Fixed-Permutation O2H (\thref{o2h}) which runs all of $\sH_{(?,0,0)}$ except it forwards $\advA_0$'s oracle queries to its own oracle which is either $P=\procRo^0_0$ or $P'=\procRo^0_1$.
	The adversary provides $X,M,Y$ while $\advD$ provides the rest of the inputs. 
	It outputs the same bit $\advA_1$ does.
	Clearly $\advD$ correctly simulates the view of $\advA$.
		
	Note that the permutations only differ on inputs for which $X=x$ and that $\advA_0$'s view is independent of $x$ when interacting with $\procRo^0_0$.
	Consequently, $\Adv^{\guess}_{P,P'}(\advD)\leq 1/2^l$ and so 
	\(\abs{\left(\Pr[\sH_{(0,0,0)}]\right)^e 
		-
		\left(\Pr[\sH_{(1,0,0)}]\right)^e}
		\leq 
		2q_0\sqrt{2^{-l}}.\)

	In $\sH_{(1,0,0)}$, $\advA_1$ is given the random string $H(x,m)$ as input and then its oracle returns $H(x,m)$ on input $(x,m)$.
	In $\sH_{(1,1,1)}$, $\advA_1$ is given the random string $B_1$ as input and then its oracle returns $B_1$ on input $(x,m)$.
	Note that these variables are otherwise unused because $\advA_0$ cannot access $H(x,m)$.
	Consequently the games are perfectly equivalent.
	Then we can move back to $\sH_{(0,1,1)}$ with analysis analogous to the transition from $\sH_{(0,0,0)}$ to $\sH_{(1,0,0)}$. 
	Putting these steps together gives \(\abs{\left(\Pr[\sH_{(1,0,0)}]\right)^e 
		-
		\left(\Pr[\sH_{(0,1,1)}]\right)^e}
		\leq 
		2q_0\sqrt{2^{-l}}.\)

	Now comparing $\sH_{(0,1,1)}$ and $\sH_{(0,1,0)}$ we see that they differ only in the behavior of $\advA_1$'s oracle.
	The former will return $B_1$ on input $(x,m)$ while the latter will return $H(x,m)$.
	Consider a distinguisher $\advD'$ for the Fixed-Permutation O2H (\thref{o2h}) which runs all of $\sH_{(0,1,?)}$ except it forwards $\advA_1$'s oracle queries to its own oracle which is either $P=\procRo^1_1$ or $P'=\procRo^1_0$.
	The adversary provides $X,M,Y$ while $\advD$ provides the rest of the inputs. 
	It outputs the same bit $\advA_1$ does.
	Clearly $\advD$ correctly simulates the view of $\advA$.
	
	We have \(\abs{\left(\Pr[\sH_{(0,1,1)}]\right)^e 
		-
		\left(\Pr[\sH_{(0,1,0)}]\right)^e} \leq 2q_1\sqrt{\Adv^{\guess}_{P,P'}(\advD')}\).
	Note that the permutations only differ on inputs for which $(X,M)=(x,m)$ and that the view of $\advA$ run by $\advD'$ in $\sG^{\guess}_{P,P'}$ matches the view it would get in $\sG^{\unguess}$.
	Consequently, $\Adv^{\guess}_{P,P'}(\advD') \leq \Adv^{\unguess}_{}(\advA)$.
	
	Finally, we can compare $\sH_{(0,1,0)}$ and $\sG^{\undist}_0$ to see that they are equivalent. 
	Putting together our claims and using the triangle inequality gives the bound
	\[
		\abs{\Adv^{\undist}_{e}(\advA)}
		\leq
		4q_0\sqrt{2^{-l}}
		+
		2q_1\sqrt{\Adv^{\guess}_{P,P'}(\advD')}
		.\]
	This completes the proof. \qed
\end{proof}

\subsection{Second Adaptive O2H}\label{sec:unruh2}
In~\cite{EC:Unruh15}, Unruh improved on his adaptive O2H result with a version that allowed the hidden point to be chosen according to a arbitrary adaptively chosen distribution, as long as this distribution has sufficient entropy.
We can capture the result using games $\sG^{\undist2}$ and $\sG^{\unguess2}$ defined in \figref{fig:unruh2}.
In both, sampling $x$ is now done with the classical algorithm $\advA_C$. 
It cannot access $\advA_0$'s state beyond the input $m$ passed to it. 
Algorithm $\advA_1$ is allowed allow access the state of both $\advA_0$ and $\advA_1$. 
Defining
$\Adv^{\undist2}_{e}(\advA)=\left(\Pr[\sG^{\undist2}_1(\advA)]\right)^e-\left(\Pr[\sG^{\undist2}_0(\advA)]\right)^e$ and $\Adv^{\unguess2}_{}(\advA)=\Pr[\sG^{\unguess2}_{}(\advA)]$.

\begin{figure}[t]
\twoColsNoDivide{0.22}{0.35}
{
\underline{Game $\sG^{\undist2}_{b}(\advA)$}\\
$H\getsr\Func(l,n)$\\
$m \getsr \advA_0[H[\oplus]]$\\
$x \getsr \advA_C(m)$\\ 
$B_0 \gets H(x)$\\
$B_1 \getsr \bits^n$\\
$b'\getsr\advA_1[H[\oplus]](x,B_b)$\\
Return $b'=1$\smallskip
}
{
\underline{Game $\sG^{\unguess2}(\advA)$}\\
$i\getsr\sett{1,\dots,q_1}$\\
$H\getsr\Func(l,n)$\\
$m \getsr \advA_0[H[\oplus]]$\\
$x \getsr \advA_C(m)$\\ 
$B_1 \getsr \bits^n$\\
Run $\advA_1[H[\oplus]](x,B_1)$ until its $i$-th query\\
Measure the input $x'$ to this query\\
Return $x=x'$\smallskip
}
\caption{Games used for Unruh's second adaptive O2H~\cite{EC:Unruh15}. Algorithm $\advA_C$ is classical. Algorithm $\advA_1$ can access the final state of $\advA_0$ and $\advA_C$.}
\label{fig:unruh2}
\hrulefill
\end{figure}

We define the collision entropy $k$ and min-entropy $\mu$ of $\advA_C$ by
\begin{align*}
	k &= \min_m -\log_2\Pr[x=y~:~x\getsr\advA_C(m), y\getsr\advA_C(m)]\\
	\mu	&= \min_{m,x} -\log_2\Pr[x=y~:~y\getsr\advA_C(m)].
\end{align*}
Note that $2\mu\geq k \geq \mu$.

Then we can state Unruh's result (Lemma 9 in the current ePrint version).
\begin{theorem}[Unruh Adaptive O2H, \cite{EC:Unruh15}]\label{thm:unruh2}
	Let $\advA=(\advA_0,\advA_1)$ be an adversary for which $\advA_i$ makes at most $q_i$ oracle queries and $\advA_C$ be a classical algorithm with collision entropy $k$.
	Then
	\[
		\abs{\Adv^{\undist2}_{1}(\advA)} 
		\leq
		2q_1\sqrt{\Adv^{\unguess2}_{}(\advA)}
		+
		(4+\sqrt{2})
		\sqrt{q_0} 2^{-k/4}.	
	\]
\end{theorem}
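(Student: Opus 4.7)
The plan is to combine the hybrid structure of the proof of \thref{thm:unruk} with the sparse Fourier representation from the proof of \thref{thm:ghhm-new}, applying the Fixed-Permutation O2H twice using ``backwards bounds''. First, I would rewrite $\sG^{\undist2}_b$ using Zhandry's sparse representation: the random oracle $H$ is replaced by a table $T$ initially all-zero in the Fourier basis, and $\advA$'s queries become $\had^Y \circ \procFRo \circ \had^Y$. I introduce an auxiliary register $Z$ (initially $|0\rangle$) and an input register $B$ (initially $|0\rangle$) that will be populated during a ``reveal'' step that runs between $\advA_0$ and $\advA_1$.

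Next I introduce an intermediate hybrid $M$ in which $\advA_1$'s oracle is reset at point $x$ to a fresh uniform value that also serves as $\advA_1$'s input. In $M$, the reveal permutation $P_M$ first swaps $T[x]$ with $Z$ (resetting $T[x]$ to the Fourier-basis zero state while moving the old value into $Z$) and then applies $\procFRo(x, B : T)$; combined with $\had^B$ gates applied by the distinguisher, this reads the reset value into $B$. The real world uses the reveal permutation $P_A = \procFRo(x, B : T)$ with no swap. For the first O2H application, a distinguisher $\advD_1$ simulates $\advA_0$, makes a single oracle query on the reveal permutation (sandwiched by $\had^B$), and then simulates $\advA_1$. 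The permutations $P_A$ and $P_M$ agree whenever the swap is trivial, i.e., whenever $T[x] = Z$; since $Z$ begins at $|0\rangle$ this reduces to $T[x] = 0$ in the Fourier basis. Measuring $T$ in the computational basis yields a sparse table with at most $q_0$ non-zero entries (after $\advA_0$'s $q_0$ queries), so a union bound using the min-entropy $\mu$ of $\advA_C$'s distribution (where $\mu \geq k/2$) bounds $\Pr[T[x] \neq 0]$ by $q_0 \cdot 2^{-\mu} \leq q_0 \cdot 2^{-k/2}$, giving an O2H contribution of $2\sqrt{q_0} \cdot 2^{-k/4}$.

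For the second O2H application (from $M$ to $\sG^{\undist2}_1$), I instrument $\advA_1$'s queries with two permutations: $Q_M = \procFRo$, matching world $M$, and $Q_B$, which first swaps $T[x]$ with $Z$, applies $\procFRo$, and then swaps $T[x]$ with $Z$ again, temporarily restoring the original $T[x]$ from $Z$ so that the query behaves as in the unmodified $\sG^{\undist2}_1$. These permutations agree on queries with $x' \neq x$, and because $\advA_1$ never accesses $Z$ the entanglement between $B$ and the swapped registers is invisible to $\advA_1$; thus the $Q_B$-world faithfully simulates $\advA_1$'s view in $\sG^{\undist2}_1$. The Fixed-Permutation O2H with $q_1$ queries bounds this transition by $2 q_1 \sqrt{\Adv^{\unguess2}(\advA)}$. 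Combining with the first application and using the square-root version of O2H for $e \in \sett{1,1/2}$ yields the claimed bound. The main obstacle is verifying that $Q_B$ correctly simulates $\sG^{\undist2}_1$ from $\advA_1$'s perspective despite the entanglement left over from the reveal: carefully tracking $\advA_1$'s marginal view and confirming it matches the true $\sG^{\unguess2}$ distribution (where $B$ is fresh and $T$ is unmodified) is the delicate part, whereas the first application essentially mirrors \thref{thm:ghhm-new} with the single addition of the $\had^B$-sandwiched read.
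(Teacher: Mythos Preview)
Your proposal is essentially correct and follows the same approach as the paper (specifically, the proof of \thref{thm:unruh2-ours}, which the paper proves in lieu of \thref{thm:unruh2} and which implies it): rewrite in Zhandry's sparse Fourier representation, apply Fixed-Permutation O2H once to the single ``reveal/sample'' step between $\advA_0$ and $\advA_1$ (yielding the $2\sqrt{q_0}2^{-\mu/2}\leq 2\sqrt{q_0}2^{-k/4}$ term via the backwards-bound argument), and once to $\advA_1$'s oracle queries (yielding the $2q_1\sqrt{\Adv^{\unguess2}(\advA)}$ term).

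The one place where your treatment diverges slightly from the paper is the second transition. You set up $Q_B$ as a swap--$\procFRo$--swap conjugation so that queries at $x$ temporarily access the ``old'' value stored in $Z$. The paper instead uses an equivalence step: it observes that swapping $B$ and $H[X^*]$ before $\advA_1$ runs is identical (since $\advA_1$ cannot touch either register directly) to leaving them unswapped and instead relabeling which of the two registers is accessed in all subsequent operations. This converts the swap hybrid $\sH_{(1,0,0)}$ into the relabeled hybrid $\sH_{(0,1,1)}$, and the second O2H then compares $\procFRo^1_1$ (use $B$ at $X^*$) with $\procFRo^1_0$ (use $H[X^*]$). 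This sidesteps exactly the entanglement concern you flag as the ``main obstacle'': rather than arguing that $Q_B$ faithfully simulates $\sG^{\undist2}_1$ despite the reveal-step entanglement, the paper's relabeling makes the equivalence to $\sG^{\undist2}_1$ (and hence the identification of the guess advantage with $\Adv^{\unguess2}(\advA)$) immediate. Your conjugation approach should also work, but the relabeling is cleaner and avoids the delicate verification you anticipate.
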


We prove the following slightly generalized result.
\begin{theorem}\label{thm:unruh2-ours}
	Let $\advA=(\advA_0,\advA_1)$ be an adversary for which $\advA_i$ makes at most $q_i$ oracle queries and $\advA_C$ be a classical algorithm with min-entropy $\mu$ and collision entropy $k$.
	Then for $e\in\sett{1,1/2}$.
	\begin{align*}
		\abs{\Adv^{\undist2}_{e}(\advA)} 
		&\leq
		2q_1\sqrt{\Adv^{\unguess2}_{}(\advA)}
		+
		2
		\sqrt{q_0} 2^{-\mu/2}\\
		&\leq
		2q_1\sqrt{\Adv^{\unguess2}_{}(\advA)}
		+
		2
		\sqrt{q_0} 2^{-k/4}.	
	\end{align*}
\end{theorem}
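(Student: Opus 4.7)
The plan is to mirror the structure of our proof of the GHHM theorem (\thref{thm:ghhm-new}): introduce a single intermediate hybrid between $\sG^{\undist2}_0$ and $\sG^{\undist2}_1$, and bound each transition by one application of the Fixed-Permutation O2H (\thref{o2h}), using Zhandry's sparse representation in the first application to obtain a ``backwards bound''. Because Unruh's first adaptive O2H (\thref{thm:unruh}) reprogrammed at many points, it was natural to emulate that proof with many O2H applications; here, since we only reprogram at the single point $x$, we can get away with one swap per transition.

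The intermediate hybrid, call it $G_1$, behaves as follows: run $\advA_0$ exactly as in $\sG^{\undist2}_0$, obtain $m$, sample $x \getsr \advA_C(m)$, then reprogram $H$ by choosing a fresh uniform $B$ and redefining $H(x) := B$, and finally run $\advA_1$ with input $(x, B)$ and oracle access to the reprogrammed function. Observe that $G_1$ differs from $\sG^{\undist2}_0$ only in that the value of $H$ at $x$ is ``refreshed'' between the two phases; it differs from $\sG^{\undist2}_1$ only in that $\advA_1$'s oracle is reprogrammed at $x$ to equal its input $B$, whereas in $\sG^{\undist2}_1$ it is not.

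For the first transition ($\sG^{\undist2}_0 \to G_1$) I would move to the Fourier/sparse representation of $H$ exactly as in the proof of \thref{thm:ghhm-new}. The reprogramming becomes a single swap of $H[x]$ with a fresh auxiliary register initialized to zero (in the Fourier basis); the ``no-swap'' permutation corresponds to $\sG^{\undist2}_0$ and the ``swap'' permutation to $G_1$, once the distinguisher reads the value of $H[x]$ into $\advA_1$'s input register after the query (the same post-processing in both sides). Applying the Fixed-Permutation O2H with a single oracle query, the guess probability is that measuring the query input reveals $H[x] \neq 0$ in the Fourier basis. After $\advA_0$'s $q_0$ queries, $H$ is a superposition of tables with at most $q_0$ non-zero entries, and conditional on the classical output $m$, the value $x \getsr \advA_C(m)$ is drawn using independent randomness from a distribution of min-entropy $\mu$. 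A union bound over the at most $q_0$ positions in the support gives at most $q_0 \cdot 2^{-\mu}$, contributing $2\sqrt{q_0 \cdot 2^{-\mu}} = 2\sqrt{q_0}\cdot 2^{-\mu/2}$.

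For the second transition ($G_1 \to \sG^{\undist2}_1$), $\advA_1$ receives a uniformly random value as input in both games; the only difference is the behavior of $\advA_1$'s oracle at position $x$ (in $G_1$ it returns the value programmed in, matching the input; in $\sG^{\undist2}_1$ it returns the original $H(x)$). Applying the Fixed-Permutation O2H with $q_1$ queries, the guess probability is exactly the probability that a random measured query of $\advA_1$ equals $x$, which is $\Adv^{\unguess2}(\advA)$, contributing $2q_1\sqrt{\Adv^{\unguess2}(\advA)}$. Combining the two transitions via the triangle inequality (valid for both $e=1$ and $e=1/2$) gives the first stated bound; the second follows from the elementary fact that collision entropy is at most twice min-entropy, so $\mu \geq k/2$ and $2^{-\mu/2} \leq 2^{-k/4}$. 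The only subtle point is the entanglement between $m$, $x$, and the state of $H$ in the first transition, but because $m$ is classical and $\advA_C$ uses independent randomness, measuring $m$ first reduces the analysis to the union bound described.
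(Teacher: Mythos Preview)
Your proposal is correct and takes essentially the same approach as the paper. The paper's intermediate hybrid (written there as $\sH_{(1,0,0)} \equiv \sH_{(0,1,1)}$, after first purifying everything and passing to Zhandry's sparse representation) is precisely your $G_1$, and the two transitions are bounded in exactly the way you describe: a single-query O2H on the swap permutation giving $2\sqrt{q_0\,2^{-\mu}}$, and a $q_1$-query O2H on $\advA_1$'s oracle giving $2q_1\sqrt{\Adv^{\unguess2}(\advA)}$; the paper also notes that making $m$ classical (by copying it into an auxiliary register) is what justifies the union bound you describe at the end.
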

Unruh conjectured that the $2^{-k/4}$ factor is an artifact of their proof technique.
If so, our proof technique has the same artifact but removes it when min-entropy is an acceptable replacement. 
This, for example, allows us to directly imply a version of \thref{thm:unruh} with the bound improved to replace $q_0$ with $\sqrt{q_0}$.

Our proof combines the ideas from our proof of Unruh's first adaptive O2H (\thref{thm:unruk}) and our proof of the GHHM adaptive reprogramming result (\thref{thm:ghhm}).
We first move to a hybrid where we reprogram $H(x)$ to equal $B_1$ before $\advA_1$ is executed (more precisely, we swap the values of $H(x)$ and $B_1$).
To obtain a tighter bound for this step than in \thref{thm:unruk}, we switch to a sparse representation of $H$ and bound the ``bad event'' at the time of the swap, as in our \thref{thm:ghhm} proof. 
Then we show the difference between this hybrid and the final game by reprogramming $\advA_1$'s access to $H$ on input $x$.

\begin{figure}[t]
\twoColsNoDivide{0.26}{0.33}
{
\underline{Games $\sG_b(\advA)$}\\
$\textbf{H}\getsr\Func(l,n)$\\
$\textbf{B}\getsr\bits^n$\\
$\textbf{R}\getsr\bits^\infty$\\
$\ket{H,B,R}\gets\ket{\mathbf{H},\mathbf{B},\mathbf{R}}$\\
Run $\advA_0[\procRo]$\\
Run $\procSamp_b$ \\
Run $\advA_1[\procRo]$\\
Measure $W[1]$\\
Return $W[1]=1$\smallskip
}
{
\underline{$\procSamp_b(X,Y:H,B,V,R,X^{\ast})$ }\\
$V \gets V \xor X$\\
$X^{\ast}\gets X^{\ast} \xor \advA_C(X;R[I])$\\
$Y \gets Y \xor H[X^{\ast}]$\comment{If $b=0$}\\
$Y \gets Y \xor B$\comment{If $b=1$}\\
Return $(X,Y:H,B,V,R,X^{\ast})$\\[4pt]
\underline{$\procRo(X,Y : H)$}\\
$Y \gets Y \xor H[X]$\\
Return $(X,Y : H)$
}
\twoColsNoDivide{0.26}{0.33}
{
\underline{Hybrids $\sH_{(a,b,c)}$}\\
$\textbf{R}\getsr\Func(\lceil\lg q\rceil, m)$\\
$\ket{R}\gets\ket{\mathbf{R}}$\\
Run $\advA_0[\had^Y\circ\procFRo^0\circ\had^Y]$\\
Run $\had^Y\circ\procFSamp_{a,b}\circ\had^Y$ \\
Run $\advA_1[\had^Y\circ\procFRo^1_c\circ\had^Y]$\\
Measure $W[1]$\\
Return $W[1]=1$\\[4pt]
\underline{$\procFRo^0(X,Y : H)$}\\
$H[X] \gets Y \xor H[X]$\\
Return $(X,Y : H)$\smallskip
}
{
\underline{$\procFSamp_{a,b}(X,Y:H,B,V,R,X^{\ast})$ }\\
$V \gets V \xor X$\\
$X^{\ast}\gets X^{\ast} \xor \advA_C(X;R[I])$\\
$(B,H[X^{\ast}])\gets (H[X^{\ast}],B)$\comment{If $a=1$}\\
$H[X^{\ast}] \gets Y \xor H[X^{\ast}]$\comment{If $b=0$}\\
$B \gets Y \xor B$\comment{If $b=1$}\\
Return $(X,Y:H,B,V,R,X^{\ast})$\\[4pt]
\underline{$\procFRo^1_c(X,Y : H, B)$}\\
If $X=X^{\ast}$ then\\
\ind $H[X^{\ast}] \gets Y \xor H[X^{\ast}]$\comment{If $c=0$}\\
\ind $B \gets Y \xor B$\comment{If $c=1$}\\
Else $H[X] \gets Y \xor H[X]$\\
Return $(X,Y : H, B)$\smallskip
}
\caption{Hybrid games for proof of \thref{thm:unruh2-ours}, implying the adaptive O2H result of  Unruh~\cite{EC:Unruh15}. Algorithm $\advA_1$, but not $\advA_0$, may access register $R$.}
\label{fig:unruh-pf2}
\hrulefill
\end{figure}

\begin{proof}
	For this proof we use the games shown in \figref{fig:unruh-pf2}.
	We start with $\sG_b$ which is simply a rewritten version of $\sG^{\undist2}_{b}$.
	It writes the use of $\advA_C$ as a single query to an oracle $\procSamp$ and make everything quantum.
	Algorithm $\advA_0$ acts on registers $W,X,Y$, while $\advA_1$ may additionally act on $R$.
	(Here $R$ stores the randomness that will be used by $\advA_C$. Giving it to $\advA_1$ is equivalent for our purposes to letting $\advA_1$ access the final state of $\advA_C$ and additionally allows $\advA_1$ to recompute the $x$ which is stored in $X^{\ast}$.)
	Registers $H$ and $B$ are controlled by the game. 
	To enforce that the $\procSamp$ query is classical, it writes the query $X$ (which stores $m$ at this time) into the otherwise unused register $V$.
	So we have $\Pr[\sG^{\undist2}_b(\advA)]=\Pr[\sG_b(\advA)]$.
	Formally, the syntax of $\advA$ changed, so on the right it should have been replaced with an appropriately defined $\advA'$.
	
	Our proof will use the hybrid games $\sH_{(a,b,c)}$ shown in \figref{fig:unruh-pf2} which are parameterized by $(a,b,c)\in\bits^{3}$.
	Informally, we will show the following.
	\[
		\sG_{0} 
		\equiv
		\sH_{(0,0,0)}
		\underset{{2\sqrt{q_02^{-\mu}}}}{\approx}
		\sH_{(1,0,0)}
		\equiv
		\sH_{(0,1,1)}
		\underset{{2q_1\sqrt{\varepsilon^{\mathsf{ow}}}}}{\approx}
		\sH_{(0,1,0)}
		\equiv
		\sG_{1}. 
	\]  
	We proceed from left to right.
	Consider $\sH_{(0,0,0)}$, 
	In this game, rather than being sampled at random, the registers $H$ and $B$ are initialized to zeros.
	The oracle $\procRo$ which writes $H[X]$ into $Y$ has instead been replaced with two (for now equivalent) oracles $\had^Y\circ\procFRo^i\circ\had^Y$ where $\had^Y$ is the Hadamard transform applied to register $Y$ and $\procFRo$ is the Fourier version of $\procRo$ where instead $Y$ is written into $H[X]$.
	Similarly, $\procFSamp$ which writes $H[X^\ast]$ or $B$ into $Y$ has been replaced with $\had^Y\circ\procSamp_{a,b}\circ\had^Y$ which writes $Y$ into $H[X^\ast]$ or $B$.
	By the sparse QROM representation technique of Zhandry, these games are completely equivalent, giving $\Pr[\sG_b(\advA)]=\Pr[\sH_{(0,b,0)}(\advA)]$.
	See our proof of \thref{thm:ghhm-new} and in particular \apref{app:ghhm-details} for an example of how to break this equivalence claims into smaller atomic steps.
	
	Now in $\sH_{(0,1,0)}$ we perform an additional swap of $B$ and $H[X^*]$.
	Note that $\procFSamp_{a,0}$ and $\procFSamp_{a,1}$ differ as permutations only if $B\neq H[X^*]$.
	
	We apply the Fixed-Permutation O2H (\thref{o2h}) with $\advD$ that runs all of $\sH_{(0,?,0)}$ internally, except it calls its oracle to emulate $\procFSamp$. 
	Note that $\advD$ makes one oracle query and at that time $H$ has at most $q_0$ non-zero entries (which thus differ from $B$ which is zero).
	Thus $\Adv^{\guess}_{\procFSamp_{0,0},\procFSamp_{1,0}}(\advD)\leq q_0/2^{-\mu}$ from the min-entropy of $\advA_C$ and $|(\Pr[\sH_{(0,0,0)}(\advA)])^e - (\Pr[\sH_{(0,1,0)}(\advA)])^e|\leq 2\sqrt{q_0/2^{\mu}}$.

	Hybrid $\sH_{(0,1,0)}$ swaps the registers $B$ and $H[X^*]$ before $\advA_1$ is run.
	Note $\advA_1$ does not have direct access to these registers, so it would be equivalent to leave $B$ and $H[X^*]$ unswapped, but instead switch which of the two is used in all future accesses to the registers.
	The is what's done in the equivalent game $\sH_{0,1,1}$.
	
	Now $\sH_{(0,1,1)}$ differs from $\sH_{(0,1,0)}$ only when $X^*$ is queried to $\procFRo^1$.
	Apply the Fixed-Permutation O2H (\thref{o2h}) with $\advD'$ that runs all of $\sH_{(0,1,?)}$ internally except it uses its own oracle to respond to $\procFRo^1$.
	We get $|(\Pr[\sH_{(0,1,0)}(\advA)])^e - (\Pr[\sH_{(0,1,1)}(\advA)])^e|\leq 2q_1\sqrt{\Adv^{\guess}_{\procFRo^1_{1},\procFRo^1_{0}}(\advD')}$.
	
	The previously discussed equivalence between $\sH_{(0,1,0)}$, $\sG_1$, and $\sG^{\undist2}_1(\advA)$ allows us to conclude that $\Adv^{\guess}_{\procFRo^1_{1},\procFRo^1_{0}}(\advD')=\Adv^{\unguess2}_{}(\advA)$ and complete the proof using the triangle inequality. \qed
\end{proof}


\section{Limitations of Fixed-Permutation O2H}\label{sec:abkh}
\begin{wrapfigure}{r}{0.33\textwidth}\small
\vspace{-4em}
\oneCol{0.26}
{
\underline{Game $\sG^{\abkhdist}_b(\advA)$}\\
$\Pi_0\getsr\Perm(n)$\\
$\advA_0[\Pi_0[\xor],\Pi^{-1}_0[\xor]]$\\
$s,s'\getsr\bits^n$\\
$\Pi_1 \gets \Pi_0 \circ S_{s,s'}$\\
$b'\getsr\advA_1[\Pi_b[\xor],\Pi^{-1}_b[\xor]](s,s')$\\
Return $b'=1$\smallskip
}
\caption{Game for ABKH resampling for permutations result}
\label{fig:abkh}
\end{wrapfigure}
Alagic, Bai, Katz, and Majenz~\cite{EC:ABKM22} gave a result for the resampling of random permutations which they call an extension of the GHHM adaptive reprogramming lemma (\thref{thm:ghhm}) to the case of two-way accessible random permutations which we cannot reproduce from Fixed-Permutation O2H because it is not known if quantumly accessible random permutations can be sparsely represented.

Define the swap permutation $S_{s,s'}$ by $S(s)=s'$, $S(s')=s$, and $S(x)=x$ otherwise.
Let $\Perm(n)$ denote the set of all permutations on $\bits^n$.
Then the relevant security game is shown in \figref{fig:abkh} for which we define $\Adv^{\abkhdist}_e(\advA)=\left(\Pr[\sG^{\abkhdist}_1(\advA)]\right)^e-\left(\Pr[\sG^{\abkhdist}_0(\advA)]\right)^e$.

Of this game, they prove the following bound (their Lemma 5).
\begin{theorem}[ABKH Permutation Resampling, \cite{EC:ABKM22}]\label{thm:abkh}
	Let $\advA=(\advA_0,\advA_1)$ be an adversary for which $\advA_0$ makes at most $q_0$ oracle queries.
	Then
	\[
		\abs{\Adv^{\abkhdist}_{1}(\advA)}
		\leq
		4\sqrt{q_0/2^n}.
	\]
\end{theorem}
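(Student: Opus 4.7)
My plan is to prove this bound without appealing to the Fixed-Permutation O2H (\thref{o2h}): as this section emphasizes, any generic reduction through Fixed-Permutation O2H yields only $O(\sqrt{q_0^2/2^n})$, falling short of the target $\sqrt{q_0/2^n}$ by a factor of $\sqrt{q_0}$. Instead I would follow the direct strategy of Alagic, Bai, Katz, and Majenz, which exploits the specific structure of a uniformly random swap applied to a uniformly random permutation.

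First I would reduce the question to bounding the dependence of $\advA_0$'s output state on the swap pair $(s, s')$. Since $\Pi_0 \circ S_{s,s'}$ is uniformly distributed whenever $\Pi_0$ and $(s, s')$ are uniform and independent, the marginal distribution of $\advA_1$'s oracle is identical in both worlds. If $\advA_0$'s final quantum state were literally independent of $(s, s')$, then $\advA_1$'s view would be identically distributed across worlds, giving advantage $0$. The distinguishing advantage is therefore controlled by the expected trace distance, over uniform $(s,s')$, between $\advA_0$'s final state under $\Pi_0$ and under $\Pi_0 \circ S_{s,s'}$, after which $\advA_1$ can be absorbed into a fixed CPTP measurement map.

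Second, I would bound this trace distance via a Bennett--Bernstein--Brassard--Vazirani style progress measure. For fixed $(s,s')$, the oracles $\Pi_0$ and $\Pi_0 \circ S_{s,s'}$ differ only on the two-element set $\{s, s'\}$ for forward queries and on $\{\Pi_0(s), \Pi_0(s')\}$ for inverse queries. Averaging over uniform $(s,s')$ independent of $\advA_0$'s queries, each of $\advA_0$'s $q_0$ queries contributes an expected $L^2$-amplitude of at most $2/2^n$ on the differing set, summing to $2q_0/2^n$ in total.

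Third, the crux and main obstacle is getting the dependence on $q_0$ right. A direct application of BBBV plus Jensen's inequality only yields $\mathop{\mathbb{E}}\|\phi^{\Pi_0} - \phi^{\Pi_0 \circ S_{s,s'}}\| = O(\sqrt{q_0 \cdot q_0/2^n}) = O(\sqrt{q_0^2/2^n})$, matching the weak bound already attainable via Fixed-Permutation O2H. To descend to $O(\sqrt{q_0/2^n})$, I would instead argue at the squared-trace-distance level and establish $\mathop{\mathbb{E}}\,\|\phi^{\Pi_0} - \phi^{\Pi_0 \circ S_{s,s'}}\|^2 \leq O(q_0/2^n)$, after which a single application of Jensen gives the claimed bound (and the factor $4$ then comes from handling forward and inverse queries together with the two swap points). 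Proving this squared-distance inequality is where the analysis genuinely departs from generic one-way-to-hiding reasoning: it requires exploiting the random-permutation-under-random-transposition structure in a way that plays the role of the sparse representation used in our proof of the GHHM result. The absence of a clean sparse/compressed representation for quantumly accessible random permutations is precisely the obstacle this section identifies, and the ABKH proof of the lemma supplies a tailored combinatorial substitute rather than routing through a generic O2H-style reduction.
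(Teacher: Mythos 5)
First, a point of comparison: the paper does not actually prove \thref{thm:abkh} with the stated bound. It cites $4\sqrt{q_0/2^n}$ from ABKM and explicitly concedes that the Fixed-Permutation O2H cannot reproduce it; what the paper proves is only a \emph{weakened} version with bound $2q_0\sqrt{2/2^n}$, obtained by observing that $s,s'$ can be sampled at the start of the game (so that it is equivalent to give $\advA_0$, rather than $\advA_1$, the reprogrammed oracle), defining fixed permutations $\procPerm_b$ that differ only on the four points $(s,1),(s',1),(\Pi_0(s),0),(\Pi_0(s'),0)$, and applying \thref{o2h} to $\advA_0$'s $q_0$ queries. You correctly diagnose why this route loses a factor of $\sqrt{q_0}$, and your plan---to prove the full bound by a direct argument on $\advA_0$'s state---is therefore aiming at a strictly stronger result than the paper establishes in this section. (The paper notes the full bound was later recovered by Liao--Ge--Xue via a Double-Oracle variant of the O2H.)

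However, as a proof your proposal has a genuine gap at exactly the step you flag as the crux. The inequality $\mathop{\mathbb{E}}_{s,s'}\|\phi^{\Pi_0}-\phi^{\Pi_0\circ S_{s,s'}}\|^2 = O(q_0/2^n)$ does not follow from the BBBV progress measure plus a single application of Jensen: the standard telescoping argument gives $\|\phi^{\Pi_0}-\phi^{\Pi_1}\|\leq\sum_{i=1}^{q_0}\|\delta_i\|$ with $\mathop{\mathbb{E}}\|\delta_i\|^2\leq 2/2^n$, and squaring via Cauchy--Schwarz yields $\mathop{\mathbb{E}}\|\phi^{\Pi_0}-\phi^{\Pi_1}\|^2\leq q_0\sum_i\mathop{\mathbb{E}}\|\delta_i\|^2 = O(q_0^2/2^n)$---the cross terms reintroduce precisely the factor you are trying to remove. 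Obtaining $O(q_0/2^n)$ at the squared level requires exhibiting cancellation among those cross terms, or working in a purified view of the random permutation that plays the role of the sparse representation used for the GHHM proof; that is the substantive content of ABKM's Lemma~5, which you describe as ``a tailored combinatorial substitute'' without supplying it. Until that inequality is established, the proposal proves nothing beyond the $O(\sqrt{q_0^2/2^n})$ bound that the paper's weakened proof already obtains.
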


While this result is intuitively related to the GHHM adaptive reprogramming result, we cannot port over our proof from \secref{sec:ghhm}, because it relied on Zhandry's technique for sparse representation of a random oracle.
No sufficiently analogous technique is currently known for random permutations.
(Indeed it is a notoriously difficult problem~\cite{AC:Unruh23}.)

\iftrue
\else
Recall that there were two reasons we relied on Zhandry's technique in that section.
First because the reprogramming points were sampled from an adaptively chosen distribution and then immediately given to the attacker we could only rely on the ``bad event'' of querying these points before the reprogramming occurred.
But at that time we don't know what points are bad!
That is not an issue for this theorem, as $s$ and $s'$ could have been sampled at the beginning of the game.

The second reason is an issue. 
We wanted bounds with the number of oracle queries inside the square root (note \thref{thm:abkh} has this form).
This was obtained by thinking of reprogramming being its own oracle and the ``bad event'' being that the chosen reprogramming point happened to coincide with a previous oracle query, formally a non-zero entry in the sparse representation of the random oracle.
Without a sparse representation for random permutations, it is unclear how to emulate this.
\fi 

Below for comparison, we prove a bound of the form $2q_0\sqrt{2/2^n}$ for \thref{thm:abkh} using Fixed-Permutation O2H by exploiting the fact that $s,s'$ are nonadaptively chosen.

In essence, the issues here correspond to the differences between the two proofs in \secref{sec:unruh}.
The proof we provide here is analogous to our proof of \thref{thm:unruk}, and we are unable to prove an analog of \thref{thm:unruh2} because we lack techniques for sparsely representing permutations.

By replacing our use of the Fixed-Permutation O2H with their new Double-Oracle Fixed-Permutation O2H, Liao, Ge, and Xue~\cite[Thm.~9]{EPRINT:LiaJiaXue25} prove that $\abs{\Adv^{\abkhdist}_{e}(\advA)}\leq 4\sqrt{q_0/2^n}$ for $e\in\sett{1,1/2}$. 

\begin{proof}[Weakened \thref{thm:abkh}]
	First note that $s,s'\getsr\bits^n$ could have been sampled at the beginning of the game.
	Then it would have been completely equivalent to give $\advA_0$ access to $\Pi_b$ and $\advA_1$ access to $\Pi_0$. 
	We can define the fixed permutations $\procPerm_b(X,Y, d: \Pi, s, s')$ to implement $\Pi_b$ if $d=1$ and $\Pi^{-1}_b$ if $d=0$. 
	They differ only on inputs for which $(X,d) \in \sett{(s,1),(s',1),(\Pi_0(s),0),(\Pi(s'),0)}$.
	So applying \thref{o2h} with these two permutations and $\advD$ that picks $s,s'$, runs $\advA_0$ with access to the permutation, then internally simulates the rest of the game and outputs whatever $\advA$ does we get 
	\[
		\Adv^{\abkhdist}_{e}(\advA)
		=
		\Adv^{\dist}_{\procPerm_1,\procPerm_0,e}(\advD)
		\leq
		2q_0\sqrt{\Adv^{\guess}_{\procPerm_1,\procPerm_0}(\advD)}
		=
		2q_0\sqrt{2/2^n}.
	\] 
	The last equality comes from noting the view of $\advA_0$ is independent of $s,s'$ and that for a given choice of $d$ there are at most two uniformly random $X$'s on which the permutations differ.\qed
\end{proof}

Alagic, Bai, Katz, Majenz, and Struck~\cite{EC:ABKMS24} introduced a stronger version of this theorem which allows $s$ to be sampled according to an adaptively chosen distribution with high min-entropy ($s'$ is still uniform).
For this version, the proof technique above would not work because we cannot sample $s$ at the beginning of the game.
At best, we could prove a variant in which $\advA_1$ is never told $s,s'$ and so we can bound its success based its ability to query the permutations on the bad points after they are chosen.

\appendix
\addcontentsline{toc}{section}{Appendices}
\section*{Change Log}
	\begin{itemize}
		\item May 2024: Fixed recurring typo pointed out by Hans Heum
		\item November 2025: Asiacrypt full version with small changes throughout to add clarity
	\end{itemize}

\ifnum\eprintversion=0
        \renewcommand{\lwid}{0.4}
        \renewcommand{\mwid}{}
		\renewcommand{\rwid}{0.45}
\else
        \renewcommand{\lwid}{0.39}
        \renewcommand{\mwid}{}
        \renewcommand{\rwid}{0.525}
\fi
\begin{figure}[h]
\begin{minipage}{0.5\textwidth}
\begin{flushleft}\setlength{\fboxsep}{2pt}
        \framebox{
        \begin{tabular}{l@{\hspace*{.2em}}r}
        \begin{minipage}[t]{\lwid\textwidth}
        	\gamesfontsize
			\underline{Games $\sG^b_0(\advB)$}\\
			$\ro\getsr\Func(l,m)$\\
			$\textbf{Z}\getsr\Func(\lceil\lg q\rceil, m)$\\
			$\textbf{R}\getsr\Func(\lceil\lg q\rceil, \infty)$\\
			$\ket{H,Z,R}\gets\ket{\ro,\mathbf{Z},\mathbf{R}}$\\
			Run $\advB[{\procRo,\procReprogram_b}]$\\
			Measure $W[1]$\\
			Return $W[1]=1$\\[4pt]
			\underline{$\procRo(X,Y : H)$}\\
			$Y \gets Y \xor H[X]$\\
			Return $(X,Y : H)$\\[4pt]
			\underline{$\procFRo(X,Y : H)$}\\
			$H[X] \gets H[X] \xor Y$\\
			Return $(X,Y : H)$\smallskip
        \end{minipage}
        &
        \begin{minipage}[t]{\rwid\textwidth}
        	\gamesfontsize
			\underline{Games $\sG^b_1(\advB)$}\\
			$\textbf{R}\getsr\Func(\lceil\lg q\rceil, m)$\\
			$\ket{R}\gets\ket{\mathbf{R}}$\\
			Run $\advB[\had^Y\circ\procFRo\circ\had^Y,\procReprogram_b]$\\
			Measure $W[1]$\\
			Return $W[1]=1$\\[8pt]
			\underline{$\procReprogram_b(X,Y:H,I,Z,R)$ }\\
			Interpret $X$ as prob.~dist.~$p$\\
			$x\gets p(R[I])$\\
			If $b=1$ then\\
			\ind $(H[x],Z[I])\gets(Z[I],H[x])$\\
			$Y \gets Y \xor x$\\
			$I\gets I+1\bmod 2^{\lceil\lg q\rceil}$\\ 
			Return $(X,Y:H,I,Z,R)$\smallskip
		\end{minipage}
        \end{tabular}
        }
\end{flushleft}

\end{minipage}
\begin{minipage}{0.5\textwidth}
	\begin{flushleft}
	        \framebox{\setlength{\fboxsep}{1pt}
	        \begin{tabular}{c}
	        \begin{minipage}[t]{0.935\textwidth}
	        	\gamesfontsize
	        	\underline{Hybrid $\sH^b_\kappa$}\\
				$\textbf{R}\getsr\Func(\lceil\lg q\rceil, \infty)$\\
				$\ket{R}\gets\ket{\mathbf{R}}$\\
				$\ket{H,Z}\gets\had\ket{H,Z}$ \comment{$\sH_1$,$\sH_2$}\\
				Run $\advB[\procRo,\procReprogram_b]$ \comment{$\sH_1$}\\
				Run $\advB[\had^{Y,H}\circ\procFRo\circ\had^{Y,H},\had^{Z,H}\circ\procReprogram_b\circ\had^{Z,H}]$ \comment{$\sH_2$}\\
				Run $\advB[\had^{Y}\circ\procFRo\circ\had^{Y},\procReprogram_b]$ \comment{$\sH_3$}\\
				$\ket{H,Z}\gets\had\ket{H,Z}$ \comment{$\sH_3$}\\
				Measure $W[1]$\\
				Return $W[1]=1$\small
	        \end{minipage}
	        \end{tabular}
	        }
	\end{flushleft}
\end{minipage}
\caption{\textbf{Left:} Reproduction of games from the proof of \thref{thm:ghhm-new}. \textbf{Right:} Hybrid games for using Zhandry's sparse representation technique. Registers not explicitly initialized are initialized to all $0$.}
\label{fig:ghhm-proof-games-2}
\label{fig:zha-use}
\hrulefill
\end{figure}

\section{Details of Sparse QROM Representation in \thref{thm:ghhm-new}}\label{app:ghhm-details}
We analyze why the games $\sG^b_0$ and $\sG^b_1$ as defined in \figref{fig:ghhm-proof-games-2} (reproduced from \figref{fig:ghhm-proof-games} in the proof of \thref{thm:ghhm-new}) are equivalent.
We will consider a sequence of hybrid games $\sH^b_1$ through $\sH^b_3$ as defined in \figref{fig:zha-use} for which $\sG^b_0$ is equivalent to $\sH_1$, $\sG^b_1$ is equivalent to $\sH^b_3$, and $\sH_\kappa$ is equivalent to $\sH_{\kappa+1}$ for each $\kappa$.

\heading{Hybrid 1}
First we define $\sH^b_1$ identically to $\sG^b_0$ except that $H$ and $Z$ are initialized by applying the Hadamard transform to the all zeros strings.
Recall that $\had\ket{x} = 1/\sqrt{2^n} \cdot \sum_{x'} (-1)^{x\cdot x'} \ket{x'}$ if $x$ is a bitstring $x\in\bits^n$.
When $x$ is the all zeros string, this gives the uniform superposition $1/\sqrt{2^n} \cdot \sum_{x'} \ket{x'}$.
Thus, if we measured $H$ and $Z$ immediately, we would be assigning them a uniformly random function as in $\sG^b_0$.
Values in $H$ and $Z$ are only ever swapped with each other or used to control xor's into other registers.
So by the principle of joint deferred measurement, leaving them unmeasured is equivalent.

\heading{Hybrid 2}
Next we define $\sH^b_2$ identically to $\sH^b_1$ except the oracles $\procRo$ and $\procReprogram_b$ have been replaced with $\had^{Y,H}\circ\procFRo\circ\had^{Y,H}$ and $\had^{Z,H}\circ\procReprogram_b\circ\had^{Z,H}$.
These do not change the behavior of the game because $\procRo=\had^{Y,H}\circ\procFRo\circ\had^{Y,H}$ and $\procReprogram_b = \had^{Z,H}\circ\procReprogram_b\circ\had^{Z,H}$.
These equalities follow from the following lemma (and the fact that $\had$ is its own inverse).
\begin{lemma}
	Define permutations $P(x,y) = (x\xor y, y)$, $P'(x,y) = (x, x\xor y)$, $S(x,y)=(y,x)$, and $I(x,y)=(x,y)$. Let $U$ be an arbitrary unitary with inverse $U^{-1}$. Then
	\begin{align*}
		P = \had \circ P' \circ \had,
		&&
		S = (U \otimes U) \circ S \circ (U^{-1} \otimes U^{-1}),
		&&
		\text{ and}
		&&
		I = (U \otimes U) \circ I \circ (U^{-1} \otimes U^{-1}).
	\end{align*}
\end{lemma}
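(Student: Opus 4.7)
My plan is to verify each of the three identities separately, in order of increasing work. Identity 3 collapses to nothing: since $I$ is the identity, $(U \otimes U) \circ I \circ (U^{-1} \otimes U^{-1}) = (U \otimes U)(U^{-1} \otimes U^{-1}) = I$.

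For identity 2, the claim amounts to saying that the swap commutes with $U \otimes U$, and I will check this on computational basis vectors $\ket{a, b}$. Computing $(U \otimes U) S \ket{a, b} = U\ket{b} \otimes U\ket{a}$ and expanding each factor in the basis yields $\sum_{i,j} U_{jb} U_{ia} \ket{j, i}$, while $S (U \otimes U)\ket{a, b} = S(U\ket{a} \otimes U\ket{b}) = \sum_{i,j} U_{ia} U_{jb} \ket{j, i}$. These agree, and linearity of both sides extends the equality to arbitrary states.

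Identity 1 is the only one requiring real computation, and it is where the main bookkeeping obstacle lies. The plan is to apply both sides to a basis vector $\ket{a, b}$, using $\had \ket{a, b} = 2^{-n}\sum_{c,d} (-1)^{a \cdot c + b \cdot d} \ket{c, d}$ (with $\cdot$ the bitwise inner product on $n$-bit strings), then act by $P'$ to send $\ket{c,d}$ to $\ket{c, c \oplus d}$, then apply $\had$ once more, which picks up a phase $(-1)^{c \cdot e + (c \oplus d) \cdot f}$ for each output basis vector $\ket{e, f}$. Using the distributivity $(c \oplus d) \cdot f = c \cdot f \oplus d \cdot f$, the total exponent regroups as $c \cdot (a \oplus e \oplus f) + d \cdot (b \oplus f)$, so that summing over $c$ and $d$ gives zero unless $e = a \oplus b$ and $f = b$. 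This leaves exactly $\ket{a \oplus b, b} = P \ket{a, b}$, completing the identity. The only care required is tracking the $\mathbb{F}_2$ bilinear form and using that $2^{-n}\sum_c (-1)^{c \cdot x}$ equals $1$ if $x = 0$ and $0$ otherwise; both are routine.
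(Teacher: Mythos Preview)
Your proof is correct. The paper states this lemma without proof, treating it as an elementary fact needed to justify the oracle equivalences in the appendix, so there is no paper proof to compare against; your direct verification on basis vectors is the natural approach and all three computations are carried out correctly.
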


\heading{Hybrid 3}
We define $\sH^b_3$ by cancelling out Hadamard transforms in $\sH_2$, using that $\had$ is its own inverse.
Note that $\had^H$ commutes with $\advB$ because $\advB$ does not act on register $H$.
Thereby, the initial $\had^{H}$ used to setup $H$ cancels with the $\had^H$ before the first oracle call.
The $\had^H$'s between any two oracle queries cancel with each other.
This leaves the $\had^H$ after the last oracle query, which $\sH^b_3$ defers to the end of the game.

Similarly, neither $\advB$ nor $\procFRo$ act on register $Z$ so transform $\had^Z$ will commute with them.
This similarly allows us to cancel out all $\had^Z$ except the one after the last query which is deferred until the end of the game.

We cannot cancel $\had^Y$ similarly, because $\advB$ acts on register $Y$.

Finally, comparing $\sH^b_3$ with $\sG^b_1$ we see they are identical except that $\sH^b_3$ has a $\ket{H,Z}\gets\had\ket{H,Z}$ at the end.
At that point of execution all that matters is the measurement of $W[1]$ which is unaffected by this operation, so it can be removed.

\end{document}